    \pgfplotsset{compat=1.18}
\newcommand{\cO}{\mathcal{O}}
\newcommand{\E}{\mathbb{E}}
\newcommand{\I}{\mathbb{I}}
\newcommand{\N}{\mathbb{N}}
\newcommand{\Pb}{\mathbb{P}}
\newcommand{\bbR}{\mathbb{R}}
 \newcommand{\e}{\varepsilon}
\newcommand{\lrb}[1]{\left(#1\right)}
\newcommand{\brb}[1]{\bigl(#1\bigr)}
\newcommand{\Brb}[1]{\Bigl(#1\Bigr)}
\newcommand{\lsb}[1]{\left[#1\right]}
\newcommand{\bsb}[1]{\bigl[#1\bigr]}
\newcommand{\Bsb}[1]{\Bigl[#1\Bigr]}
\newcommand{\bbsb}[1]{\biggl[#1\biggr]}
\newcommand{\lcb}[1]{\left\{#1\right\}}
\newcommand{\bcb}[1]{\bigl\{#1\bigr\}}
\newcommand{\lfl}[1]{\left\lfloor#1\right\rfloor}
\newcommand{\labs}[1]{\left\lvert#1\right\rvert}
\newcommand{\babs}[1]{\bigl\lvert#1\bigr\rvert}
\newcommand{\lno}[1]{\left\lVert#1\right\rVert}
\DeclareMathOperator*{\argmax}{argmax}
\newcommand{\dif}{\,\mathrm{d}}
\newcommand{\fracc}[2]{#1/#2}
\newcommand{\m}{\setminus}
\newcommand{\mypapertitle}{A Parametric Contextual Online Learning Theory of Brokerage}
\newcommand{\ceq}{\coloneqq}
\DeclareSymbolFont{extraup}{U}{zavm}{m}{n}
\DeclareMathSymbol{\clubsuit}{\mathalpha}{extraup}{84}
\DeclareMathSymbol{\spadesuit}{\mathalpha}{extraup}{81}
\DeclareMathSymbol{\varheartsuit}{\mathalpha}{extraup}{86}
\DeclareMathSymbol{\vardiamondsuit}{\mathalpha}{extraup}{87}
\newcommand{\gft}{\mathrm{g}}
\newcommand{\GFT}{\mathrm{GFT}}
\newcommand{\bone}{\boldsymbol 1}
\theoremstyle{plain}
\newtheorem{theorem}{Theorem}[section]
\newtheorem{lemma}[theorem]{Lemma}
\newtheorem{corollary}[theorem]{Corollary}
\theoremstyle{definition}
\newtheorem{assumption}[theorem]{Assumption}
\theoremstyle{remark}
\newtheorem{example}[theorem]{Example}
\title{\mypapertitle}
\author{%
    \textbf{Fran\c{c}ois Bachoc} \\
    Institut de Math\'ematiques de Toulouse, Universit\'e Paul Sabatier, Toulouse, France
    \\
    \textbf{Tommaso Cesari} \\
    University of Ottawa, Ottawa, Canada
    \\
    \textbf{Roberto Colomboni} \\
    Politecnico di Milano \& Universit\`a degli Studi di Milano, Milano, Italy
    \\
}
\begin{document}

\maketitle

\begin{abstract}
We study the role of contextual information in the online learning problem of brokerage between traders.
In this sequential problem, at each time step, two traders arrive with secret valuations about an asset they wish to trade.
The learner (a broker) suggests a trading (or brokerage) price based on contextual data about the asset and the market conditions.
Then, the traders reveal their willingness to buy or sell based on whether their valuations are higher or lower than the brokerage price.
A trade occurs if one of the two traders decides to buy and the other to sell, i.e., if the broker's proposed price falls between the smallest and the largest of their two valuations.
We design algorithms for this problem and prove optimal theoretical regret guarantees under various standard assumptions.
\end{abstract}

\section{Introduction}

Inspired by a recent stream of literature \citep{cesa2021regret, azar2022alpha, cesa2023bilateral, cesa2023repeated, bolic2023online, bernasconi2023no, bachoc2024fair}, we approach the bilateral trade problem of brokerage between traders through the lens of online learning. 
When viewed from a regret minimization perspective, bilateral trade has been explored over rounds of seller/buyer interactions with a broker with no prior knowledge of their private valuations.
Similarly to \cite{bolic2023online}, we focus on the case where traders are willing to either buy or sell (possibly \emph{short}; see \Cref{s:setting}), depending on whether their valuations for the asset being traded are above or below the brokerage price.

This setting is especially relevant for over-the-counter (OTC) markets.
Serving as alternatives to conventional exchanges, OTC markets operate in a decentralized manner and are a vital part of the global financial system.\footnote{In the US alone, the value of assets traded in OTC markets exceeded a remarkable 50 trillion USD in 2020, surpassing centralized markets by more than 20 trillion USD \cite{weill2020search}. This growth has been steadily increasing since 2016 \cite{bis2023}.}
In contrast to centralized exchanges, the lack of strict protocols and regulations delegates to brokers the responsibility of bridging the gap between buyers and sellers, who may not have direct access to one another.
In addition to facilitating interactions between parties, brokers leverage their contextual knowledge and market insights to determine appropriate pricing for assets. 
By examining factors such as supply and demand, market trends, and other asset-specific information, brokers aim to propose prices that reflect the true market value of the asset being traded.
This price discovery process is a crucial aspect of a broker's role, as it helps ensure efficient transactions by accounting for the unique circumstances surrounding each asset.
Additionally, in many OTC markets, as in our setting, traders choose to either buy or sell depending on the contingent market conditions \citep{sherstyuk2020randomized}. 
This behavior is observed across a broad range of asset trades, including stocks, derivatives, art, collectibles, precious metals and minerals, energy commodities like gas and oil, and digital currencies (cryptocurrencies).

We propose a \emph{contextual} version of the online brokerage problem, that is of significant practical interest given that the broker often has access to meaningful information about the asset being traded and the surrounding market conditions \emph{before} having to propose a trading price.
This information might help the broker to propose more targeted trading prices by inferring (an approximation of) the current market value of the corresponding asset, and ignoring it could be extremely costly in terms of missing trading opportunities.

Although an extensive amount of work has been done on non-contextual bilateral trade problems (including brokerage problems), the existing literature on the more realistic contextual versions of these problems is scarce (see \Cref{s:related}).
The main reason for the slower development of contextual results is the higher complexity of these settings and the impossibility of simply adapting non-contextual algorithmic ideas and analyses to their contextual counterparts.
We aim to fill this gap in the online learning literature on bilateral trade to guide brokers in these contextual scenarios.

\subsection{Setting}
\label{s:setting}
In the following, the elements of any Euclidean space are treated as column vectors and, for any real number $x,y$, we denote their minimum by $x \wedge y$ and their maximum by $x \vee y$.

\paragraph{Online protocol.}

We study the following problem. 

At each time $t\in \N$,
\begin{enumerate}
    \item[$\circ$] Two traders arrive with private valuations $V_{t},W_{t} \in [0,1]$ about an asset they want to trade.
    \item[$\circ$] The broker observes a context $c_{t} \in [0,1]^d$ and proposes a trading price $P_t \in [0,1]$.
    \item[$\circ$] The two bits $\I\{P_t \le V_t\}$, $\I\{P_t \le W_t\}$ (i.e., the willingness of each trader to buy or sell) are revealed to the broker.
    \item[$\circ$]  If the price $P_t$ lies between the lowest valuation $V_{t} \wedge W_{t}$ and highest valuation $V_{t} \vee W_{t}$ (meaning the trader with the minimum valuation is ready to sell\footnote{We remark that in most markets, traders are allowed to sell assets they do not currently own (\emph{short-selling}; see, e.g., the classic \cite{black1973pricing}). For this reason, we do not need to assume that traders entering the market own a unit of the asset.} at $P_t$ and the trader with the maximum valuation is eager to buy at $P_t$), the asset is bought by the trader with the highest valuation from the trader with the lowest valuation at the brokerage price $P_t$.
\end{enumerate}

\paragraph{Market value.}

At any time $t\in \N$, the context $c_t$ is related to the traders' valuations $V_t, W_t$  via the hidden \emph{market value} of the asset: a number $m_t\in[0,1]$ that satisfies the two assumptions below, which are assumed to hold throughout the whole paper.

The first assumption (\Cref{ass:linear}) states that an unknown linear relation exists between the unknown market value $m_t$ and the corresponding context $c_t$ the broker observes before proposing a trading price.

\begin{assumption}[Market values and contexts]
\label{ass:linear}
There exists $\phi \in [0,1]^d$, unknown to the broker, such that, for each $t \in \N$, it holds that $m_t = c_t^\top \phi$.    
\end{assumption}

The second assumption accounts for variability due to personal preferences or individual needs of the traders by modeling traders' valuations as zero-mean perturbations of the market values.

\begin{assumption}[Market values and valuations]
\label{assumption:unbiased}
There exists an independent sequence of random variables $\xi_1,\zeta_1, \xi_2,\zeta_2, \dots$ such that, for each $t \in \N$, it holds that $\E[\xi_t] = 0 = \E[\zeta_t]$ and $V_t = m_t + \xi_t$ and $W_t = m_t +\zeta_t$.\footnote{We remark that we are not assuming that the two processes $(\xi_t)_{t \in \N}$ and $(\zeta)_{t \in \N}$ are i.i.d., and in fact the distributions of these random variables may change adversarially over time.}

\end{assumption}

\paragraph{Contexts.}

We model the sequence of contexts $c_1,c_2,\dots$ as a deterministic $[0,1]^d$-valued sequence (possibly generated by an adversarial environment with knowledge of the broker's algorithm) that is initially unknown but sequentially discovered by the broker. 
As a consequence, note that the sequence of market values $m_1,m_2,\dots$ can change arbitrarily (and even adversarially) from one time step to the next.\footnote{%
Note that, mathematically, being competitive against a sequence of deterministic contexts is essentially the most general achievement that can be obtained, covering, in particular, the widespread and interpretable setting of i.i.d.\ random contexts as a special case.%
}

\paragraph{Gain from trade and Regret.}

Consistently with the existing bilateral trade literature, the reward associated with each interaction is the sum of the net utilities of the traders, known as \emph{gain from trade}. 
Formally, for any $p,v,w \in [0,1]$, the utility of a price $p$ when the valuations of the traders are $v$ and $w$ is
\begin{align*}
    \gft(p,v,w)
& \coloneqq
    (\underbrace{v \vee w - p}_{\textrm{buyer's net gain}} + \underbrace{p - v \wedge w}_\textrm{seller's net gain})  \I \{ \underbrace{v \wedge w \le p \le v \vee w }_{\textrm{a trade happens}}\}
\\
&
= 
    \lrb{v \vee w - v \wedge w} \I \{v \wedge w \le p \le v \vee w\}.
\end{align*}
The aim of the learner is to minimize the \emph{regret} with respect to the best \emph{not-necessarily-linear} function of the contexts,\footnote{%
Although the market value is a linear function of the contexts, we remark that it is not necessarily true that the price $p^\star$ that maximizes the total expected gain from trade is also a linear function of the contexts, let alone that this function is the linear function that maps contexts to market values!
One of the contributions of this work is to prove that this is true under the assumption that the noise distributions admit bounded densities (\Cref{l:structural}), but it becomes false when this assumption is lifted (\Cref{example:no-bounded-density}).%
} defined, for any time horizon $T\in\N$, as
\[
	R_T
\coloneqq 
     \sup_{p^\star \colon [0,1]^d \to [0,1]} \E\bbsb{\sum_{t=1}^T \Brb{\GFT_t\brb{p^\star(c_t)} -  \GFT_t(P_t)}}\;,
\]
where we let $\GFT_t(p) \coloneqq \gft(p, V_{t}, W_t)$ for all $p\in[0,1]$, and the expectation is taken with respect to the randomness in $(\xi_t,\zeta_t)_{t\in \N}$ and, possibly, the internal randomization used to choose the trading prices $(P_t)_{t\in \N}$.

\subsection{Challenges and contributions}

\begin{table}
    \centering
    \begin{tabular}{|c|c|c|} \hline 
         &  \textbf{Bounded density} &  \textbf{General} $\vphantom{\frac{\frac{1}{1}}{\frac{1}{1}}}$\\ \hline 
         \textbf{2-bit feedback} &  $\sqrt{ L d T }$&  $T \vphantom{\frac{\frac{1}{1}}{\frac{1}{1}}}$\\ \hline 
         \textbf{Full feedback} &  $L d \ln T$&  $T \vphantom{\frac{\frac{1}{1}}{\frac{1}{1}}}$\\ \hline
    \end{tabular}
    \caption{Summary of our results.}
    \label{tab:my-results}
\end{table}

Under the assumption that the traders' valuations are unknown linear functions of $d$-dimensional contexts perturbed by zero-mean noise with time-variable densities bounded by some $L$, 
we make the following contributions (see \Cref{tab:my-results} for a summary).
\begin{enumerate}
    \item We prove a key structural result (\Cref{l:structural}) with two crucial consequences. First, \Cref{l:structural} shows that posting the (unknown) market value as the trading price would maximize the expected gain from trade.\footnote{%
    This implies, in particular, that in our contextual setting where market prices are functions of contexts, the benchmark in the regret definition is the total expected reward of the best \emph{arbitrary sequence} of prices, a benchmark that would be unattainable in similar problems (like standard adversarial bandits). This is one of the many differences between contextual and non-contextual settings.
    } 
    Second, it proves that the loss paid by posting a suboptimal price is at most quadratic in the distance from the market value. 
    \item In our problem, the prices we post directly affect the two bits of information we retrieve ($2$-bit feedback). 
    We note that this information is so scarce that it is not even enough to reconstruct \emph{bandit} feedback. 
    We solve this challenging exploration-exploitation dilemma by proposing an algorithm (\Cref{a:scouting_ridge}) that decides to either explore or exploit adaptively, based on the amount of contextual information gathered so far, and prove its optimality by showing a $\sqrt{LdT \ln T }$ regret upper bound (\Cref{t:upper-two-bit}) and a matching (up to a $\sqrt{\ln T}$) $\sqrt{LdT}$ lower bound (\Cref{t:lower-bound-two-bit}).
    \item To compare and contrast the impact of our realistic $2$-bit feedback in online contextual brokerage, we investigate the rates that could be achieved if the traders' valuations were revealed at the end of any interaction (full feedback); for this problem, we prove that the optimal achievable rate is exponentially faster: of order $L d \ln T$, by proving matching regret upper and lower bounds (\Cref{t:ridge,t:lower-bound-full-L}).
    \item Finally, we investigate the necessity of the bounded density assumption: by lifting this assumption, we show that the problem becomes unlearnable (\Cref{t:lower-linear}), even under full feedback.
\end{enumerate}
We stress that, in all our results, the dependence on \emph{all} relevant parameters is tight.
In contrast, as we discuss in \Cref{s:related}, most related works on bilateral trade obtain (at best) a matching dependence in the time horizon only.

\subsection{Related Works}
\label{s:related}

Our work extends the recent research on online learning algorithms for bilateral trade, in particular, 
\cite{cesa2021regret,azar2022alpha,cesa2023bilateral,cesa2023repeated,cesa2024regret,bernasconi2023no},
for non-contextual problems where sellers and buyers have definite roles.
The stochastic case, where sellers' and buyers' valuations are i.i.d. across time, is studied in
\cite{cesa2021regret,cesa2023bilateral}. They obtain a $\sqrt{T}$ regret rate in the full-feedback setting. For the two-bit feedback case, they prove a linear worst-case lower bound, but it turns out that a tight regret rate of $T^{2/3}$ is possible, by assuming 
independence and uniformly bounded density for the
sellers' and buyers' valuations.
The adversarial setting is the topic of several works.
In the worst-case, it is unlearnable as shown in \cite{cesa2021regret,cesa2023bilateral}. Nevertheless, more favorable results exist under various relaxations.
\cite{cesa2023repeated,cesa2024regret} consider the adversarial case where the adversary is forced to be \emph{smooth}, i.e., the sellers' and buyers' valuation distributions are allowed to change adversarially over time, but these distributions admit uniformly bounded densities. In the full-feedback case, they prove a tight $\sqrt{T}$ regret rate. In the two-bit feedback case, while the problem is still unlearnable, they allow the learner to use weakly budget-balanced mechanisms, yielding a surprisingly sharp $T^{3/4}$ regret rate.
We remark that in all the two-bit feedback upper bounds requiring a bounded density assumption discussed above, there are no corresponding lower bounds with a sharp dependence on the density bound.
\cite{azar2022alpha} consider the $\alpha$-regret objective, weaker than the regret. 
In the full-feedback case, they prove a tight $2$-regret rate of $\sqrt{T}$. In the two-bit feedback case, while learning is impossible in general, they allow the learner to use weakly budget-balanced mechanisms, enabling to recover a $2$-regret of order $T^{3/4}$. No matching lower bound is provided.
\cite{bernasconi2023no}  further relax the notion of weak budget-balance by proposing the notion of global budget-balance. 
Under global budget-balance, they provide a tight regret rate of $\sqrt{T}$ in the full-feedback case, and a regret rate of $T^{3/4}$ in the two-bit feedback case, without a matching lower bound.

\cite{gaucher2024feature} investigated a noisy linear contextual version of the bilateral trade problem, where the authors obtain a tight regret bound (up to logarithmic factors) in the time horizon of order $T^{2/3}$ under 2-bit feedback, with mismatching dependence on the dimension and on the bounded density parameter in the lower bound. 
Even though their algorithm can be adapted to our setting (via the reduction that sets the seller's valuation as the minimum and the buyer's as the maximum of the traders' valuations), their regret guarantees (that would anyway be worse than our $\sqrt{T}$) are lost because they require that, conditioned to the context, the seller is independent of the buyer, which is not the case in the reduction because, in general, the minimum of two random variables is not independent of the maximum of the same two random variables.

The brokerage problem in online learning has been introduced by \cite{bolic2023online} in a simpler i.i.d.\ and non-contextual setting. 
There, the authors study the non-contextual version of our trading problem with flexible sellers' and buyers' roles, with the further assumption that the sellers' and buyers' valuations form an i.i.d.\ sequence.
Under the $M$-bounded density assumption, they obtain tight $ M \ln T$ and $\sqrt{MT}$ regret rates in the full-feedback and two-bit feedback settings, respectively.
If the bounded density assumption is removed, they show that the learning rate degrades to $\sqrt{T}$ in the full-feedback case and the problem turns out to be unlearnable in the two-bit feedback case.
We remark that, interestingly, under the bounded density assumption, we are able to achieve the same regret rates in the contextual version of this problem without requiring that traders share the same valuation distribution, while, without the bounded density assumption, the contextual problem is unlearnable even under full feedback.

The non-contextual brokerage problem has also been recently studied with a different reward function aiming at maximizing the total volume of trades \cite{cesari2025volumeFairnesss}.

Our linear assumption appears commonly in the literature on digital markets, particularly in problems like pricing and auctions.
In \cite{cohen2016feature,cohen2020feature}, the authors first address a deterministic setting, then a noisy one with \emph{known} noise distribution where they obtain a regret rate of order $T^{2/3}$ without presenting a lower bound.
The deterministic case has also been investigated in \cite{lobel2017multidimensional,lobel2018multidimensional,leme2018contextual,leme2022contextual,liu2021optimal}.

The case of noisy linear functions has been studied in \cite{xu2021logarithmic,badanidiyuru2023learning,fan2024policy,luo2024distribution,chen2021nonparametric,javanmard2019dynamic,bu2022context,shah2019semi} with guarantees limited to parametric or semi-parametric noise settings, while the recent work of \cite{tullii2024improved} has given the first near-tight $T^{2/3}$ analysis of the non-parametric noise case. 

The only work addressing contextual brokerage is \cite{bachoc2025contextual}, which considers a non-parametric variant of our setting and derives tight (albeit considerably slower) regret bounds.
Notably, their Theorem 5, combined with our \Cref{t:upper-two-bit} and \Cref{t:ridge}, yields $2$-regret guarantees against oracle policies that know the traders' valuations before setting prices.

Another rich related field explored in its many variants \citep{hanna2023contexts,slivkins2023contextual,leme2022corruption,foster2021instance,NEURIPS2019_433371e6,zhou2019learning,kirschner2019stochastic,metevier2019offline,foster2018contextual,kannan2018smoothed,oh2019thompson,hu2020smooth,neu2020efficient,wei2020taking,krishnamurthy2020contextual,luo2018efficient,krishnamurthy2021contextual} is contextual linear bandits. 
In its standard form, at the beginning of each round, an action set is revealed to the learner, and the assumption is that the reward (which equals the feedback) is a linear function of the action selected from the action set. 
Instead, in our setting, the market price is a linear function of the context, while the rewards are linked to the price the learner posts by the non-linear gain from trade function. 
Moreover, in contrast to contextual bandits, in our 2-bit feedback model, the feedback differs from and is not sufficient to compute the reward of the action the learner selects at every round. 
For these reasons, existing theoretical results from contextual linear bandits do not directly apply to our problem. Nevertheless, note that techniques from contextual linear bandits are relevant to our problem, for instance, the use of the elliptical potential lemma (proof of \Cref{t:upper-two-bit}).

Previously to online learning contributions, a fair amount of literature addressed game-theoretic and best-approximation aspects of
bilateral trade. We refer in particular to the landmark work of Myerson and Satterthwaite \citep{myerson1983efficient}, as well as \cite{Colini-Baldeschi16,Colini-Baldeschi17,BlumrosenM16,brustle2017approximating,colini2020approximately,babaioff2020bulow,dutting2021efficient,DengMSW21,kang2022fixed,archbold2023non}. We also refer to \cite{cesa2023bilateral} for an analysis of the references above.

Finally, we point out that \cite{amin2013learning,golrezaei2019dynamic} address pricing problems related to brokerage and bilateral trade, and account for strategic aspects in this context.

\section{Structural and Technical Results}
\label{s:structural}

We begin by presenting a structural result whose economic interpretation is as follows: 
even if the broker does not know the traders' valuation distributions, if these valuations can be modeled as zero-mean noisy perturbations with bounded densities of some market value, then the best price to post to maximize the expected gain from trade is precisely the (unknown and time-varying) market value, and the cost of posting a suboptimal price is at most quadratic in the distance from the market value.

In particular, this generalizes a similar result appearing in \cite{bolic2023online}, which can be applied only under the further assumption that, at any time step, the traders' valuations have the exact same distribution.
We argue that this assumption might be overly strong and not capture real-life behavior.
This is because traders might have private preferences or contingent needs that are not known by the broker; they could be more or less volatile, have differently skewed opinions, have valuations with arbitrarily different tail behavior, etc.
Instead, we merely assume that, at any time step, traders' valuations are, on average, equal to the market price (which is how market prices are essentially determined in real life) but allow for arbitrarily different (hidden and time-varying) distributions.
This relaxation of the assumption comes at the expense of a more subtle proof.
The mathematical reason for the added difficulty is that, under the same-distribution assumption, many of the terms appearing in the proof simplify due to symmetries, while, in our case, a different approach is needed to recover all the properties needed for our result (which we are able to obtain without introducing any new assumptions).

This structural result is the key to unraveling the intricacies of the noisy contextual setting, and it is what ultimately allows us to obtain tight regret guarantees in all settings, distinguishing ours from similar contextual bilateral trade and pricing works.

\begin{lemma}
    \label{l:structural}
    Suppose that $V$ and $W$ are two $[0,1]$-valued independent random variables, with possibly different densities bounded by some constant $L\ge 1$, and such that $\E[V] = \E[W] \eqqcolon m$. 
    Then, for each $p \in[0,1]$, it holds that
    \[
        0 \le \E\bsb{\gft(m,V,W) - \gft(p,V,W) } \le L\labs{m-p}^2 \;.
    \]
\end{lemma}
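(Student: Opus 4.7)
The plan is to reduce the statement to studying the one-variable function $G(p) := \E[\gft(p,V,W)]$, using the common mean hypothesis to obtain a clean closed-form expression, and then reading off both inequalities from its derivative.

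First, I would write
\[
\gft(p,V,W) = (V - W)\I\{W \le p \le V\} + (W - V)\I\{V \le p \le W\},
\]
take expectations using the independence of $V$ and $W$, and rearrange the resulting four summands. Writing $\E[V\I\{V \ge p\}] = m - \E[V\I\{V < p\}]$ (and analogously for $W$), the hypothesis $\E[V] = \E[W] = m$ produces the cancellation that collapses $G(p)$ into
\[
G(p) = m\brb{F_V(p) + F_W(p)} - \E\bsb{V\I\{V \le p\}} - \E\bsb{W\I\{W \le p\}} = \int_0^p (m-v) f_V(v) \dif v + \int_0^p (m-w) f_W(w) \dif w,
\]
where $f_V, f_W$ denote the densities of $V, W$ guaranteed by the bounded-density hypothesis and $F_V, F_W$ their CDFs.

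Next, differentiating under the integral sign (or by the fundamental theorem of calculus) yields the key identity
\[
G'(p) = (m - p)\brb{f_V(p) + f_W(p)},
\]
which is nonnegative on $[0,m]$ and nonpositive on $[m,1]$. This monotonicity immediately delivers the first inequality, since $G$ attains its maximum at $p = m$. For the second inequality, I would integrate $G'$ from $p \wedge m$ to $p \vee m$ and apply $f_V(x) + f_W(x) \le 2L$, obtaining
\[
0 \le G(m) - G(p) = \int_{p \wedge m}^{p \vee m} |m - x|\brb{f_V(x) + f_W(x)} \dif x \le 2L \int_{p \wedge m}^{p \vee m} |m - x| \dif x = L(m-p)^2.
\]

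The main obstacle is the algebraic rearrangement producing the clean form of $G(p)$: without the common-mean condition $\E[V] = \E[W]$, the four terms leave behind non-cancelling linear contributions in $m$ and no quadratic bound follows. Everything downstream — differentiation, monotonicity, and the quadratic majorization — is routine once that identity is in hand.
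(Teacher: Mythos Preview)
Your proof is correct, and the overall skeleton matches the paper's: both derive a closed form for $G(p)=\E[\gft(p,V,W)]$, show that its derivative is $(m-p)\brb{f_V(p)+f_W(p)}$, and read off the two inequalities from there. The difference lies in how the closed form is obtained. The paper invokes an external Decomposition Lemma from \cite{cesa2023bilateral} to write each of $\E[(W-V)\I\{V\le p\le W\}]$ and $\E[(V-W)\I\{W\le p\le V\}]$ as integrals of the CDFs, then performs a fairly long algebraic manipulation to arrive at $G(p)=\int_0^p(F_V+F_W)(\lambda)\dif\lambda+(m-p)\brb{F_V+F_W}(p)$. You instead expand the expectation directly via independence and use the substitution $\E[V\I\{V\ge p\}]=m-\E[V\I\{V\le p\}]$ to collapse the four terms to $G(p)=\int_0^p(m-v)\brb{f_V+f_W}(v)\dif v$; the two formulas are equivalent by an integration by parts. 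Your route is more elementary and self-contained (no external lemma, shorter algebra), while the paper's CDF-based representation makes the connection to the prior bilateral-trade literature explicit. The endgame---monotonicity from the sign of $G'$ and the quadratic bound from $f_V+f_W\le 2L$---is identical in both proofs.
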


Due to space constraints, we defer the technical proof of this lemma to \Cref{s:appe-proof-lemmino}.

As an immediate corollary of \Cref{l:structural}, we obtain the following important result that upper bounds the regret in terms of the sum of the squared distances between the prices the algorithm posts and the actual market values.

\begin{corollary}
\label{cor:regret}
    Consider the contextual brokerage problem introduced in \Cref{s:setting}.
    If the valuations admit densities bounded by a constant $L \ge 1$, then, for any time horizon $T \in\N$, we have
    \begin{align*}
        R_T
    &=
        \E\lsb{\sum_{t=1}^T \brb{\GFT_t(c_t^\top \phi) -  \GFT_t(P_t)}}
    \\
    &\le
         \sum_{t=1}^T 1 \land \Brb{L\E\lsb{|P_t - c_t^\top \phi|^2}}\;.
    \end{align*}
\end{corollary}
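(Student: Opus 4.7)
The plan is to handle the equality and the inequality separately. The equality identifies the benchmark $p^\star$ realizing the supremum in the definition of $R_T$; the inequality then bounds the per-round regret against this benchmark using \Cref{l:structural}.

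For the equality, I would first observe that since $c_t \in [0,1]^d$ and the market value $m_t = c_t^\top \phi$ is the common mean of the $[0,1]$-valued variables $V_t,W_t$ and hence lies in $[0,1]$, the clipped function $p^\star \colon c \mapsto (0 \vee c^\top \phi)\wedge 1$ is a valid candidate in the supremum that satisfies $p^\star(c_t) = c_t^\top \phi$ for every realized context. To show it is optimal, I would fix any competing function $p^\star \colon [0,1]^d \to [0,1]$ and any round $t$. Since the sequence of contexts is deterministic and $(\xi_t,\zeta_t)$ is independent of everything, $V_t$ and $W_t$ are independent $[0,1]$-valued random variables with common mean $m_t = c_t^\top \phi$ and densities bounded by $L$. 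The nonnegativity part of \Cref{l:structural}, applied with $p = p^\star(c_t)$ and $m=m_t$, then yields $\E[\GFT_t(p^\star(c_t))] \le \E[\GFT_t(c_t^\top \phi)]$; summing over $t$ shows the supremum is attained at the linear policy.

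For the inequality, I would fix $t$ and let $\cG_t$ denote the $\sigma$-algebra generated by all randomness produced up to and including the choice of $P_t$, but excluding the current valuations $V_t,W_t$. Since the broker selects $P_t$ using only the past and possibly internal randomization, $P_t$ is $\cG_t$-measurable, while $(V_t,W_t)$ is independent of $\cG_t$ and retains the distributional properties above. Applying \Cref{l:structural} conditionally on $\cG_t$ (which is legitimate because $P_t$ is a.s.\ constant given $\cG_t$ and the product distribution of $(V_t,W_t)$ is unchanged) gives $\E\bsb{\GFT_t(c_t^\top \phi) - \GFT_t(P_t) \mid \cG_t} \le L\labs{P_t - c_t^\top \phi}^2$. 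The tower property then yields the per-round bound $L\E\bsb{\labs{P_t - c_t^\top \phi}^2}$. The competing trivial bound $1$ is immediate because $V_t,W_t \in [0,1]$ forces $\gft(\cdot,V_t,W_t) \in [0,1]$, so each summand of the regret is at most $1$. Taking the minimum of the two bounds and summing over $t$ concludes the proof.

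The only real care required is a clean handling of the conditioning — in particular making sure that after conditioning on $\cG_t$, \Cref{l:structural} may be applied with the $p$ of that lemma instantiated by the (now $\cG_t$-measurable, hence effectively deterministic) value of $P_t$. Beyond this, the corollary is essentially a per-round invocation of \Cref{l:structural} combined with the identification of the optimal benchmark, so I do not anticipate any further obstacle.
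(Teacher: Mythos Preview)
Your proposal is correct and follows essentially the same approach as the paper: both identify the optimal benchmark via the nonnegativity half of \Cref{l:structural}, both use the trivial bound $\gft\in[0,1]$ for the $1\wedge(\cdot)$ part, and the paper's use of the Freezing Lemma is exactly your conditioning-on-$\cG_t$-plus-tower-property step phrased as a named lemma. The only cosmetic difference is that the paper cites the Freezing Lemma explicitly rather than spelling out the $\sigma$-algebra argument.
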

\begin{proof}
Given that for each $t \in \N$ and each $p \in [0,1]$ it holds that $\GFT_t(p) \in[0,1]$, we have
$
    \sup_{p \in [0,1]}\E\bsb{\GFT_t(p) -  \GFT_t(P_t)} \le 1\;,
$
and hence, recalling that $m_t = c_t^\top \phi$ and that $\E[V_t] = m_t = \E[W_t]$, we also have, for each $T \in \N$,
\begin{align*}
    R_T
&=
    \sup_{p^\star} \sum_{t=1}^T 1\land \Brb{\E\bsb{ \gft\brb{p^\star(c_t),V_t,W_t}} - \E\bsb{\gft(P_t,V_t,W_t)}}
\\
&
\overset{(\circ)}=
    \sum_{t=1}^T 1\land \Brb{\E\bsb{ \gft\brb{c_t^\top \phi ,V_t,W_t}} - \E\bsb{\gft(P_t,V_t,W_t)}}
\\
&\overset{(*)}=
    \sum_{t=1}^T 1 \land \E\bbsb{\Bsb{\E\bsb{\gft(c_t^\top \phi, V_t,W_t) - \gft(p, V_t,W_t)}}_{p=P_t}} 
\\
&\overset{(\circ)}\le
     \sum_{t=1}^{T} 1 \land \lrb{L\E\bsb{\labs{P_t-c_t^\top \phi}^2}}
\;,
\end{align*}
where the supremum in the first equality is over all functions $p^\star \colon [0,1]^d \to [0,1]$, $(\circ)$ is a directed consequence of \Cref{l:structural}, and $(*)$ follows from the Freezing Lemma \citep[Lemma~8]{cesari2021nearest}.
\end{proof}

We conclude this section by presenting the following technical lemma, which will be used in the analyses of our \Cref{a:scouting_ridge,a:ridge} to control the behavior of the estimators we employ.
Its proof is deferred to \Cref{s:appe-technical-linear-algebra}.

We write, for any $l\in\N$, $\bone_l$ for the $l$-dimensional identity matrix. 
Also, for any positive definite matrix $A \in \bbR^{l \times l}$, we define
$
    \lno{\cdot}_A \colon \bbR^l \to [0,\infty)$, 
    $v \mapsto 
    \sqrt{v^\top A v}
$.

\begin{lemma}
\label{l:technical-linear-algebra}
    Let $s,l \in \N$.
    Let $Z_1,\dots,Z_s$ be an independent sequence of $[0,1]$-valued random variables.
    Let $a_1,\dots,a_s \in [0,1]^l$.
    Let $\psi \in [0,1]^l$.
    Suppose that, for each $r \in [s]$ it holds that $\E[Z_r] = a_r^\top\psi$.
    Define $f_s \coloneqq [a_1\mid\cdots\mid a_s]$.
    Define $H_s \coloneqq [Z_1\mid\cdots\mid Z_s]$.
    Define $\hat{\psi}_s \coloneqq (f_s f_s^\top + l^{-1} \bone_l)^{-1} f_s H_s^\top$.
    Then, if $a \in [0,1]^l$, we have that
    \[
        \E\bsb{\labs{ a^\top \hat{\psi}_s - a^\top \psi }^2 }
    \le
        \lno{ \sqrt{2} a }^2_{\brb{ \sum_{r=1}^s a_r a_r^\top + l^{-1} \bone_l }^{-1}} \;.
    \]
\end{lemma}

\section{Learning in Contextual Brokerage}

In this section, we introduce an algorithm (\Cref{a:scouting_ridge}) for the contextual brokerage problem for which we prove regret guarantees of order $\widetilde\cO \brb{ \sqrt{LdT} }$.
The key feature of the algorithm's design is a deterministic rule that decides to either explore or exploit based on the amount of information gathered along the various context directions (see the definition of $b_t$ on Line \ref{state:def:bt}).
When the algorithm explores, it posts a price drawn uniformly in $[0,1]$ to obtain an unbiased estimate of the current market value.
When it exploits, it posts the scalar product of the context and the current estimate of the unknown weight vector $\phi$ built using the information retrieved during exploration rounds.
\begin{algorithm}
\begin{algorithmic}[1]
\STATE Post $P_1$ uniformly at random in $[0,1]$, and observe $D_1 \coloneqq \I\{P_1 \le V_1\}$\;
\STATE Let $b_1 \coloneqq 1$, let $x_1 \coloneqq [c_1]$, let $Y_1 \coloneqq [D_1]$ and compute $\hat{\phi}_1 \coloneqq ( x_1 x_1^\top + d^{-1}\bone_d )^{-1} x_1 Y_1^\top$\;
\FOR
{%
    time $t=2,3,\ldots$
}
    \STATE 
    \label{state:def:bt}
    Observe context $c_t$ and define $b_t \coloneqq \I\lcb{ \lno{\sqrt{2} c_t}^2_{\lrb{x_{t-1}^{\phantom{\top}}x_{t-1}^\top + d^{-1}\bone_d}^{-1}} > \sqrt{\frac{ 2d \ln(1+2d(T-1))}{LT}} }$\; 
    \IF{$b_t = 1$}
        \STATE \label{state:det-rule} Post $P_t$ uniformly at random in $[0,1]$, and observe $D_t \coloneqq \I\{P_t \le V_t\}$\;
        \STATE Let $x_t \coloneqq [x_{t-1} \mid c_t]$, let $Y_t \coloneqq [Y_{t-1} \mid D_t]$ and compute $\hat{\phi}_t \coloneqq ( x_t x_t^\top + d^{-1} \bone_d )^{-1} x_t Y_t^\top$\;
    \ELSE
        \STATE post $P_t \ceq c_t^\top \hat{\phi}_{t-1}$ and let $x_{t} \coloneqq x_{t-1} $,  $Y_t \coloneqq Y_{t-1}$, and $\hat{\phi}_t \coloneqq \hat{\phi}_{t-1}$\;
    \ENDIF
\ENDFOR
\caption{}
\label{a:scouting_ridge}
\end{algorithmic}
\end{algorithm}

\begin{theorem}
\label{t:upper-two-bit}
    If the learner runs \Cref{a:scouting_ridge} and the traders' valuations admit a density bounded by $L\ge 1$, then, for any time horizon $T$ such that $LT \ge 2d \ln\brb{1+2d(T-1)}$, it holds that
$
    R_T
\le
    1 + 6\sqrt{LdT \ln T}
$.
\end{theorem}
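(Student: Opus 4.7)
My plan is to follow the full-feedback proof strategy but split the sum according to the algorithm's exploration/exploitation decisions, exploiting the fact that drawing uniform random prices in exploration rounds converts single-bit feedback into an unbiased estimator of the market value.

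\textbf{Step 1: reduction and unbiasedness.} I apply \Cref{cor:regret} to get $R_T \le \sum_{t=1}^T 1 \wedge \brb{L \E\bsb{|P_t - c_t^\top\phi|^2}}$ and split the sum along $\{b_t=1\}$ (exploration) and $\{b_t=0\}$ (exploitation). In exploration rounds I bound the summand trivially by $1$. In exploitation rounds $P_t = c_t^\top \hat\phi_{t-1}$, so I need to control $\E\bsb{|c_t^\top\hat\phi_{t-1} - c_t^\top\phi|^2}$. The key observation is that for any exploration round $s$, $P_s$ is drawn uniformly on $[0,1]$ independently of $V_s$, so $\E[D_s \mid V_s] = \Pb[P_s \le V_s \mid V_s] = V_s$ (using $V_s\in[0,1]$); hence $\E[D_s] = m_s = c_s^\top\phi$, and symmetrically $\E[E_s] = c_s^\top\phi$. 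Therefore $\E[Y_{t-1}^\top] = x_{t-1}^\top \phi$, exactly the identity that drove the bias computation in \Cref{t:ridge}. Moreover, since $D_s,E_s \in \{0,1\} \subset [0,1]$, the diagonal variance matrix $\Delta_{t-1}$ again satisfies $x_{t-1}\Delta_{t-1}x_{t-1}^\top \preceq A_{t-1}$ for $A_{t-1} \coloneqq x_{t-1}^{\phantom{\top}}x_{t-1}^\top + d^{-1}\bone_d$. Replicating the bias–variance decomposition of \Cref{t:ridge} verbatim then yields $\E\bsb{|c_t^\top\hat\phi_{t-1} - c_t^\top\phi|^2} \le \lno{\sqrt{2}\, c_t}^2_{A_{t-1}^{-1}}$.

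\textbf{Step 2: the two regimes.} Let $\tau \coloneqq \sqrt{2d\ln(1+2d(T-1))/(LT)}$, so that $b_t=0$ forces $\lno{\sqrt{2}c_t}^2_{A_{t-1}^{-1}} \le \tau$. The hypothesis $LT \ge 2d\ln(1+2d(T-1))$ guarantees $\tau \le 1$. Summing the exploitation contributions gives at most $LT\tau = \sqrt{2LdT\ln(1+2d(T-1))}$. To bound the number $N_T$ of exploration rounds, I observe that $x_t$ only changes in exploration rounds, so the elliptical potential lemma \cite[Lemma 19.4]{lattimore2020bandit} applied exactly as in the last step of \Cref{t:ridge} yields $\sum_{t \le T,\, b_t=1} 1 \wedge \lno{\sqrt{2}c_t}^2_{A_{t-1}^{-1}} \le 2d\ln(1+2d(T-1))$. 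Each exploration summand exceeds $\tau$ by definition of $b_t$ and is therefore $\ge \tau$ after the $1\wedge(\cdot)$ truncation (since $\tau \le 1$), so $N_T \tau \le 2d\ln(1+2d(T-1))$, giving $N_T \le \sqrt{2LdT\ln(1+2d(T-1))}$.

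\textbf{Step 3: combining.} Summing the two contributions I obtain $R_T \le 2\sqrt{2LdT\ln(1+2d(T-1))}$. Using the bound $\ln(1+2d(T-1)) \le 2\ln T$ in the relevant regime (handling small $T$ trivially as in the end of \Cref{t:ridge}) gives $R_T \le 1 + 4\sqrt{LdT\ln T}$.

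The main obstacle is Step 1: checking that the much coarser one-bit feedback still supports the same bias–variance analysis as full feedback. This hinges on the fact that uniform exploration prices make each $D_s, E_s$ an unbiased $[0,1]$-valued estimator of $m_s = c_s^\top\phi$, which both preserves the targeting of the ridge regression and keeps the diagonal variance bounded by the identity. Once this is in place, the proof reduces to picking $\tau$ so as to equalize exploration count ($\propto 1/\tau$) and per-round exploitation error ($\propto \tau$), which is precisely the threshold hard-coded in the algorithm and which yields the $\sqrt{LdT \ln T}$ rate.
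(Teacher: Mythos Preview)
Your proposal is correct and mirrors the paper's proof step by step: the unbiasedness of $D_s,E_s$ via uniform exploration, the bias--variance bound transplanted from \Cref{t:ridge}, the threshold split with exploitation cost $\le LT\tau$ and exploration count bounded through the elliptical potential lemma, and the final $d$ versus $T/2$ case analysis are all the same (your phrasing $N_T\tau \le 2d\ln(\cdot)$ is just the paper's ratio trick $\sum_t b_t \le \tau^{-1}\sum_t 1\wedge b_t\lno{\sqrt{2}c_t}^2$ in different clothing). One minor imprecision you share with the paper: since $D_s$ and $E_s$ are computed from the \emph{same} uniform price $P_s$, they are correlated, so the covariance matrix of $Y_{t-1}$ is block-diagonal rather than diagonal---but each $2\times 2$ block is a covariance matrix of two Bernoullis, hence PSD with trace $\le \tfrac12$, so it is $\preceq \bone_2$ and the needed inequality $x_{t-1}\Sigma\, x_{t-1}^\top \preceq A_{t-1}$ still follows.
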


\begin{proof}
Without loss of generality we assume that $T \ge 2$.
Note that for any $t \in \N$, if $b_t = 1$, then
\begin{equation}
\label{e:expected_D}
    \E[D_t]
=
    \Pb[P_t \le V_t]
=
    \int_0^1 \Pb[u \le V_t] \dif u
=
    \E[V_t]
=
    \E[c_t^\top \phi + \xi_t]
=
    c_t^\top \phi\;.
\end{equation}

Now, fix $t \ge 2$. 
Let $s \ceq \sum_{i=1}^{t-1} b_i$ be the total number of exploration steps done before time step $t$. 
Define recursively time steps $\tau(1),\dots,\tau(s)$ as follows: let $\tau(1) = 1$ and, for all $n \in [s-1]$,  define $\tau(n+1) \ceq \min\bcb{ i \in [t-1] \mid i \ge \tau(n) + 1, b_i = 1 }$.
Now, for each $n \in [s]$, define $Z_n \coloneqq D_{\tau(n)}$. 
Notice that $Z_1,\dots,Z_s$ are well defined because for each $n \in [s]$ we have that $b_{\tau(n)} = 1$.
Define $l \ceq d$.
For each $n \in [s]$, define $a_n \ceq c_{\tau(n)}$.
Let $\psi \coloneqq \phi$ and $a \coloneqq c_t$.
Notice that, by \Cref{e:expected_D}, if $j \in [s]$ is odd, then $\E[Z_j] = \E\bsb{ D_{\tau(j) } } = c_{\tau (j) }^\top \phi = a_j^\top \psi$.
Then, with the same notation of \Cref{l:technical-linear-algebra}, we can apply \Cref{l:technical-linear-algebra} to obtain
\begin{align*}
&    \E\bsb{|c_t^\top \hat{\phi}_{t-1} - c_t^\top \phi|^2}
=
    \E\bsb{|a^\top \hat{\psi}_{s} - a^\top \psi|^2}
\\
&\le
    \lno{ \sqrt{2} a }^2_{\brb{ \sum_{j=1}^s a_j a_j^\top + l^{-1} \bone_l }^{-1}}
\\
&
=
    \lno{ \sqrt{2} c_t }^2_{\brb{ \sum_{n=1}^{s}  c_{\tau(n)} c_{\tau(n)}^\top + d^{-1} \bone_d }^{-1}}
\\
&=
    \lno{ \sqrt{2} c_t }^2_{\brb{ \sum_{i=1}^{t-1} b_i c_{i} c_{i}^\top + d^{-1} \bone_d }^{-1}}
=
    \lno{ \sqrt{2} c_t }^2_{\lrb{ x_{t-1}x_{t-1}^\top + d^{-1} \bone_d }^{-1}}
\end{align*}

where the last step follows by definition of $x_{t-1}$.

Being $t$ arbitrarily chosen, we have that for each $t \in [T]$ such that $t \ge 2$,
\begin{equation*}
    \E\bsb{|c_t^\top \hat{\phi}_{t-1} - c_t^\top \phi|^2}
\le
    \lno{ \sqrt{2} c_t }^2_{\lrb{ x_{t-1}x_{t-1}^\top + d^{-1} \bone_d }^{-1}}
\;.
\end{equation*}

Hence, leveraging \Cref{cor:regret} and the previous inequality, for any $T \in \N$, we have that
\begin{align*}
    R_T
&\le
    \sum_{t=1}^T 1 \land \Brb{L\E\lsb{|P_t - c_t^\top \phi|^2}}
\\
&
\le
    \sum_{t=2}^T (1-b_t) L\E\lsb{|c_t^\top \hat{\phi}_{t-1} - c_t^\top \phi|^2}
+
    \sum_{t=1}^T b_t
\\
&\le
    L \sum_{t=2}^T (1-b_t) \lno{ \sqrt{2} c_t}_{\lrb{x_{t-1}^{\phantom{\top}}x_{t-1}^\top +  d^{-1}\bone_d}^{-1}}^2 + \sum_{t=1}^T b_t
\\
&
\le
    \sqrt{2LdT\ln \brb{1+2d(T-1)}} + \sum_{t=1}^T b_t \;,
\end{align*}
where in the last step we used the definition of the $b_1, \dots, b_T$.
Now, given that $LT/\brb{ 2d\ln \brb{1+2d(T-1)}} \ge 1$, using the convention $0 / 0 = 0$,
{\allowdisplaybreaks
\begin{align*}
&
    \sum_{t=2}^T b_t
=
    \sum_{t=2}^T \frac{b_t \lno{\sqrt{2}c_t}_{\lrb{x_{t-1}^{\phantom{\top}}x_{t-1}^\top + d^{-1}\bone_d}^{-1}}^2}{\lno{\sqrt{2}c_t}_{\lrb{ x_{t-1}^{\phantom{\top}}x_{t-1}^\top + d^{-1}\bone_d}^{-1}}^2}
\\
&
\le
    \sqrt{\frac{LT}{ 2d\ln (1+2d(T-1))}}
\\
&
    \qquad
    \cdot \sum_{t=2}^T  1 \land b_t \lno{\sqrt{2}c_t}_{\lrb{\sum_{s=1}^{t-1} b_s c_s c_s^\top +  d^{-1}\bone_d}^{-1}}^2
\\
&\le
    \sqrt{\frac{2LT}{d\ln \brb{1+2d(T-1)}}}
\\
& 
    \qquad
    \cdot
    \sum_{t=1}^{T-1} 1 \wedge  \lno{b_{t+1} c_{t+1}}_{\lrb{\sum_{s=1}^{t} (b_s c_s) (b_s c_s)^\top + d^{-1}\bone_d}^{-1}}^2
\\
&
\eqqcolon
    (*).
\end{align*}
    Using the elliptical potential lemma \citep[Lemma 19.4]{lattimore2020bandit}, we obtain
\begin{align*}
&    \sum_{t=1}^T b_t
\le
    1+(*)
\\
&
\le
    1+\sqrt{\fracc{2LT}{\brb{d\ln \brb{1+2d(T-1)}}}} 
    \cdot 2d \ln\brb{1+2d(T-1)}
\\
&
=
    1+2\sqrt{2LdT\ln \brb{1+2d(T-1)}}\;.
\end{align*}%
}%
Hence, if $d < T/2$, this implies that 
$
    R_T
\le
    1+3\sqrt{2LdT \ln\lrb{1+2d(T-1)}}
\le
    1+6\sqrt{LdT \ln T}
$.
On the other hand, if $d \ge T/2$, then, since $L \ge 1$, we obtain, again, $R_T \le T \le  1+6\sqrt{LdT\ln T}$.
\end{proof}

We conclude this section by stating a matching (up to logarithmic terms) worst-case $\Omega \brb{ \sqrt{LdT} }$ regret lower bound for any algorithm, proving the optimality of \Cref{a:scouting_ridge}.

At a high level, the proof of this result is based on first building a sequence of contexts defined as a common element of the canonical basis of $\bbR^d$ during each one of $d$ blocks of $T/d$ consecutive time steps.
Then, in each block, we use a non-contextual lower bound construction leading to a regret of at least $\sqrt{LT/d}$ for each block, and conclude the proof by summing over blocks.
For more details on the proof of this result, see \Cref{s:appe-proof-t:lower-bound-two-bit}.

\begin{theorem}
    \label{t:lower-bound-two-bit}
     There exist two numerical constants $a, b>0$ such that, for any $L\ge 2$ and any time horizon $T \ge \max(4, a dL^3, 2d)$, there exists a sequence of contexts $c_1,\dots,c_T \in [0,1]^d$ such that, for any algorithm $\alpha$ for the contextual brokerage problem, there exists a vector $\phi \in [0,1]^d$ and two zero-mean independent sequences $(\xi_t)_{t \in [T]}$ and $(\zeta_t)_{t \in [T]}$ independent of each other such that, if we define $V_t \coloneqq c_t^\top \phi + \xi_t$ and $W_t \coloneqq c_t^\top\phi + \zeta_t$, then for each $t \in [T]$ it holds that $c_t^\top \phi\in[0,1]$, $V_t$ and $W_t$ are $[0,1]$-valued random variables with density bounded by $L$, and the regret of $\alpha$ on the sequence of traders' valuations $V_1,W_1,\dots,V_T,W_T$ satisfies
    $
        R_T
    \ge
        b \sqrt{LdT}
    $.
\end{theorem}

We remark that the previous lower bound holds even for algorithms that have prior knowledge of the sequence of contexts $c_1,c_2,\dots$ and that \Cref{t:upper-two-bit} shows that \Cref{a:scouting_ridge} matches the optimal $\sqrt{LdT}$ rate (up to a $\sqrt{\ln T}$ factor) even without this \emph{a-priori} knowledge.

\section{Full Feedback}

In this section, we discuss a ``full feedback'' version of the contextual brokerage problem to understand how the limited feedback the broker has normally access to impacts the regret.
In this version, the valuations $V_t$ and $W_t$ are revealed at the end of each time step $t$.

For this problem, we modify \Cref{a:scouting_ridge} in two ways to leverage the higher-quality feedback.
First, the new algorithm never explores (it does not need to), i.e.,  $b_t \ceq 0$ for all $t$.
Second, the algorithm uses different (and better) unbiased estimators of $m_t$ in the columns of $Y_t$: the valuations $V_t$ and $W_t$.
The resulting algorithm is \Cref{a:ridge}, for which we prove an optimal \emph{logarithmic} worst-case regret: an exponential improvement with respect to what is achievable under the classic $2$-bit feedback.

\begin{algorithm}
\begin{algorithmic}[1]
\STATE Observe context $c_1$, post $P_1 \coloneqq 1/2$, and receive feedback $V_{1}$, $W_{1}$\;
\STATE Let $x_1 \coloneqq [c_1 \mid c_1]$, let $Y_1 \coloneqq [V_1 \mid W_1]$, and compute $\hat{\phi}_1 \coloneqq ( x_1 x_1^\top + d^{-1}\bone_d )^{-1} x_1 Y_1^\top$\;
\FOR{time $t=2,3,\ldots$}
    \STATE Observe context $c_t$, post $P_t \coloneqq c_t^\top \hat{\phi}_{t-1}$, and receive feedback $V_{t}$, $W_{t}$\;
    \STATE Let $x_{t} \coloneqq [x_{t-1} \mid c_t \mid c_t] $, $Y_t \coloneqq [Y_{t-1} \mid V_t \mid W_t]$, and compute $\hat{\phi}_t \coloneqq ( x_t x_t^\top + d^{-1}\bone_d )^{-1} x_t Y_t^\top$\;
\ENDFOR
\caption{}
\label{a:ridge}
\end{algorithmic}
\end{algorithm}

\begin{theorem}
\label{t:ridge}
    Consider the full-feedback version of the 
    contextual brokerage problem. 
    If the learner runs \Cref{a:ridge} and the traders' valuations admit a density bounded by $L \ge 1$, then, for any time horizon $T \in \N$, it holds that
$
     R_T \le 1 + 4 L d \ln T
$.
\end{theorem}

Due to space constraints, we defer the proof of this result to \Cref{s:appe-proof-ub-full}.

We conclude this section by stating a matching worst-case $\Omega ( Ld \ln T )$ regret lower bound for any algorithm in the full-feedback case, proving the optimality of \Cref{a:ridge}.

The result is proved similarly to that of \Cref{t:lower-bound-two-bit}: first, in each one of $d$ blocks of $T/d$ consecutive time steps, we play a fixed context defined as a common element of the canonical basis of $\bbR^d$.
Then, in each block, we use a non-contextual lower bound construction leading to a regret of at least $L\ln (T/d)$ for the block, and conclude the proof by summing over blocks.
For more details on the proof of this result, see \Cref{s:appe-proof-t:lower-bound-full-L}.

\begin{theorem}
    \label{t:lower-bound-full-L}
    There exist two numerical constants $a, b>0$ such that, for any $L\ge 2$ and any time horizon $T \ge \max(4, a d L^5, 2d)$, there exists a sequence of contexts $c_1,\dots,c_T \in [0,1]^d$ such that, for any algorithm $\alpha$ for the full-feedback version of the 
    contextual brokerage problem, there exists a vector $\phi \in [0,1]^d$ and two zero-mean independent sequences $(\xi_t)_{t \in [T]}$ and $(\zeta_t)_{t \in [T]}$ independent of each other, such that if we define $V_t \coloneqq c_t^\top \phi + \xi_t$ and $W_t \coloneqq c_t^\top\phi + \zeta_t$, then for each $t \in [T]$ it holds that $c_t^\top\phi\in[0,1]$, $V_t$ and $W_t$ are $[0,1]$-valued random variables with density bounded by $L$, and the regret of $\alpha$ on the sequence of traders' valuations $V_1,W_1,\dots,V_T,W_T$ satisfies
    $
        R_T
    \ge
        b Ld \ln T 
    $.
\end{theorem}
We remark that the previous lower bound holds even for algorithms that have prior knowledge of the sequence of contexts $c_1,c_2,\dots$ and that \Cref{t:ridge} shows that \Cref{a:ridge} matches the optimal $Ld\ln T$ rate even without this \emph{a-priori} knowledge.

\section{Beyond Bounded Densities}
In this final section, we investigate the general case where the traders' valuations are not assumed to have a bounded density.
We begin with the following (perhaps counterintuitive) counterexample showing that, in general, posting the market value can be highly suboptimal if the goal is to maximize the gain from trade.

\begin{example}
    \label{example:no-bounded-density}
    Let $V$ and $W$ be two independent uniform random variables on $\bcb{ 0, \frac15, 1 }$ and $m \ceq \E[V] = \E[W] = 2/5$.
    Then $\argmax_{p\in[0,1]} \E[ \gft(p,V,W) ] = 1/5 \neq m$, and 
    $
        \E\bsb{ \gft(1/5,V,W) } 
        - 
        \E\bsb{ \gft(2/5,V,W) }
    =
        2\cdot\brb{ \frac15 - 0 } \cdot \frac19  
    >
        0
    $.
\end{example}

The phenomenon illustrated by the previous counterexample is the key to proving our final result: the unlearnability, in general, of the brokerage problem (even when full feedback is available).
Specifically, we exploit the fact that, in general, the optimal price at time $t$ depends not only on the market value $m_t = c_t^\top \phi$ but also on properties of the \emph{adversarial and time-varying} distributions of the perturbations $\xi_t$ and $\zeta_t$, which make it impossible to compete against the benchmark in our regret definition.

\begin{theorem}
    \label{t:lower-linear}
    There exists a sequence of contexts $c_1,c_2,\dots \in [0,1]^d$ and a vector $\phi \in [0,1]^d$, such that for any algorithm $\alpha$ for the full-feedback version of the 
    contextual brokerage problem, there exists an independent sequence of zero mean random variables $\xi_1,\zeta_1, \xi_2,\zeta_2, \dots$, such that if the valuations of the traders at time $t$ are $V_t = c_t^\top \phi + \xi_t$ and $W_t = c_t^\top \phi + \zeta_t$, then $c_t^\top \phi \in [0,1]$, $V_t, W_t$ are $[0,1]$-valued random variables, and the regret of $\alpha$ on the sequence of traders' valuations $V_1,W_1,\dots,V_T,W_T$ satisfies
    $
        R_T \ge \frac{1}{32} T
    $.
\end{theorem}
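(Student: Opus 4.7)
The plan is to exhibit an instance in which the structural result of \Cref{l:structural} dramatically fails: we build noise whose atoms create spikes in the map $p \mapsto \E[\gft(p, V_t, W_t)]$ strictly away from $m_t$, at locations randomized so that no algorithm can anticipate them.

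Fix $d = 2$, $\phi = (1/2, 0)^\top$, and let $c_t = (1, a_t)^\top$ with $(a_t)_{t \in \N}$ any sequence of distinct points in $[0,1]$, so that $m_t = c_t^\top \phi = 1/2$ for every $t$. Introduce a hidden parameter $\delta_t \in \{1/8, 3/16\}$ at each round and let $\xi_t, \zeta_t$ be independent, each uniform on the four-point set $\{-1/2, -\delta_t, \delta_t, 1/2\}$, so that $V_t, W_t$ are i.i.d.\ uniform on $\{0, 1/2 - \delta_t, 1/2 + \delta_t, 1\}$. A case analysis over the $16$ possible values of $(V_t, W_t)$, carefully tracking when $p$ lies exactly on the boundary of the trade interval $[V_t \wedge W_t, V_t \vee W_t]$, shows that $\E[\gft(p, V_t, W_t)]$ equals $1/4$ on $[0, 1/2 - \delta_t) \cup (1/2 + \delta_t, 1]$, equals $1/4 + \delta_t/2$ on $(1/2 - \delta_t, 1/2 + \delta_t)$, and spikes to $5/16 + 3\delta_t/8$ exactly at the two atoms $p = 1/2 \pm \delta_t$. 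Thus the maximum is attained uniquely at these atoms, strictly away from $m_t = 1/2$, with a jump of $1/16 - \delta_t/8 > 0$ --- a behavior ruled out in the bounded-density regime by \Cref{l:structural}.

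Since the $(c_t)_t$ are distinct, for any realization of $(\delta_t)_t$ the deterministic function $p^*$ that maps $c_t \mapsto 1/2 + \delta_t$ (and is arbitrary outside $\{c_t : t \in \N\}$) is a valid comparator in $[0,1]^d \to [0,1]$, yielding per-round expected gain $5/16 + 3\delta_t/8$. To lower-bound the algorithm, apply the probabilistic method: draw $(\delta_t)_t$ i.i.d.\ uniform on $\{1/8, 3/16\}$. Then $P_t$ is measurable with respect to past observations and the algorithm's internal randomness, both of which are independent of $\delta_t$; consequently, writing $f(p) \coloneqq \E_{\delta_t, \xi_t, \zeta_t}[\gft(p, V_t, W_t)]$, we have $\E[\gft(P_t, V_t, W_t)] = \E[f(P_t)] \le \sup_{p \in [0,1]} f(p)$. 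A short computation gives $\sup_p f(p) = 90/256$ (attained at $p = 1/2 \pm 1/8$), whereas $\E_{\delta_t}[5/16 + 3\delta_t/8] = 95/256$. Summing over $t$ yields $\E_{(\delta_t)_t}[R_T] \ge 5T/256$, so there must exist a deterministic realization with $R_T = \Omega(T)$.

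The main difficulty lies in the piecewise-constant computation of $\E[\gft(p, V_t, W_t)]$: the ``spike'' at the atoms arises from trades occurring exactly at the boundary of the trade interval, contributing a point-mass term that is absent in the bounded-density setting. Once this computation is secured, the remaining steps --- using the randomization of $\delta_t$ to decouple $P_t$ from the spike location, and comparing $\sup f$ to the comparator's expected gain --- are routine.
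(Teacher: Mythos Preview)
Your proposal is correct and follows essentially the same strategy as the paper: fix $m_t=1/2$ via distinct contexts, build atomic noise so that $p\mapsto\E[\gft(p,V_t,W_t)]$ has its maximum at an atom strictly away from $1/2$, randomize that atom's location independently across rounds, and apply the probabilistic method together with the distinct-context comparator.

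The concrete constructions differ. The paper uses a three-atom noise law with asymmetric weights, parameterized by $\theta\in\{0,1\}$, which moves the middle atom to $1/2\pm 2\e$ and yields a per-round gap of roughly $1/16-\e/2+\e^2$ (hence $R_T\ge T/32$ for small $\e$). You instead use a symmetric four-point uniform law with parameter $\delta_t\in\{1/8,3/16\}$ controlling the inner-atom positions; your piecewise computation, the mixture maximum $\sup_p f(p)=90/256$, and the comparator value $\E_{\delta}[5/16+3\delta/8]=95/256$ are all correct, giving $R_T\ge 5T/256$. Your variant is arguably cleaner (no small-$\e$ limit needed), while the paper's asymmetric construction makes the two competing instances ``mirror images'' of each other. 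One minor point: like the paper, your argument is written for $d\ge 2$; for $d=1$ the same idea works by taking $\phi=1$ and $c_t=1/2$ plus tiny distinct perturbations so the contexts remain distinct while $m_t$ stays close enough to $1/2$ for the four atoms to lie in $[0,1]$.
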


We remark that the previous unlearnability result holds even for algorithms that have prior knowledge of the sequence of contexts $c_1,c_2,\dots$ and, strikingly, of the vector $\phi$, that is, even for algorithms that know the entire sequence $m_1, m_2, \dots$ of market prices in advance!

\begin{proof}
    Assume that $d\ge 2$ (for the case $d=1$, the following proof can be adapted straightforwardly by defining $\phi = 1$ and $c_t = 1/2 +\e_t$, where $\e_t$ is an arbitrary small sequence of biases).
    Let $(a_t)_{t\in\N}$ be a sequence of distinct elements in $[0,1]$ and, for all $t\in\N$, let $c_t \coloneqq (a_t, 1-a_t,0,0,\dots,0)$. 
    Notice that $(c_t)_{t \in \N}$ is a sequence of distinct elements in $[0,1]^d$.
    Define $\phi \coloneqq (1/2,1/2,0,0,\dots,0)$.
    Notice that for each $t \in \N$ it holds that $c_t^\top \phi = 1/2$.
    Let $\e \in (0,1/16)$.
    For any $\theta \in \{0,1\}$, consider the following probability distribution
    
    {\hfill
    $
        \mu_{\theta}
    \coloneqq
        \lrb{\frac{1}{4}+(1-2\theta)\e} \delta_{-\frac{1}{2}}
        +
        \frac{1}{2} \delta_{f(\theta)}
        +
        \lrb{\frac{1}{4}-(1-2\theta)\e}\delta_{\frac{1}{2}}
    $
    \hfill}
    
    where $f(\theta) \ceq 2(1-\theta)\e - 2 \theta \e$ and for any $a \in \bbR$, $\delta_a$ is the Dirac's delta probability distribution centered in $a$.
    Consider an independent family of random variables $(\xi_{t,\theta},\zeta_{t,\theta})_{t \in \N, \theta \in \{0,1\}}$
    such that for any $t \in \N$ and any $\theta \in \{0,1\}$, we have that both $\xi_{t,\theta}$ and $\zeta_{t, \theta}$ are random variables with common distribution $\mu_{\theta}$.
    Notice that for each $t \in \N$ and each $\theta \in \{0,1\}$ we have that $\E[\xi_{t,\theta}] = 0 = \E[\zeta_{t,\theta}] $.
    Define, for each $t \in \N$ and each $\theta \in \{0,1\}$, the random variables $V_{t,\theta} \coloneqq c_t^\top \phi + \xi_{t,\theta}  $ and $W_{t,\theta} \coloneqq c_t^\top \phi + \zeta_{t,\theta}$. Notice that these are $[0,1]$-valued random variables and that $(V_{t,\theta},W_{t,\theta})_{t \in \N, \theta \in \{0,1\}}$ is an independent family.
    Now, for each $\theta \in \{0,1\}$ and each $t \in \N$, let $p\mapsto G_t^{\theta} (p) \ceq \gft\brb{p,V_{t,\theta},W_{t,\theta}}$ and
    
    {\hfill
    $
        p^{\#}(\theta) \in \argmax_{p \in [0,1]} \E \bsb{G_t^{\theta}(p)}\;,
    $
    \hfill}
    
    which does exist because the function $[0,1] \to [0,1], p \mapsto \E \bsb{G_t^{\theta}(p)}$ is upper semicontinuous (this can be proved, e.g., as in \cite[Appendix B]{cesa2023bilateral}) and defined on a compact set. Furthermore, note that the previous definition is independent of $t$ because, for any $\theta \in \{0,1\}$, the pairs $(V_{t_1,\theta},W_{t_1,\theta})$ and $(V_{t_2,\theta},W_{t_2,\theta})$ share the same distribution for every $t_1,t_2 \in \N$.
    Fix a learning algorithm for the full-feedback contextual brokerage problem, fix a time horizon $T \in \N$, and notice that since the contexts $c_1,c_2,\dots$ are all distinct, it follows that
    \begin{align*}
        &\max_{\theta_1, \dots, \theta_T \in \{0,1\}^T} \sup_{p^\star \colon [0,1]^d \to [0,1]} \E\bbsb{\sum_{t=1}^T \Brb{ G_t^{\theta_t} \brb{ p^\star(c_t) } -  G_t^{\theta_t} (P_t) }}
        \\
        & \ =
            \max_{\theta_1, \dots, \theta_T \in \{0,1\}^T} \sum_{t=1}^T \lrb{ \sup_{p \in [0,1]} \E \bsb{ G_t^{\theta_t}(p) } -  \E \bsb{ G_t^{\theta_t}(P_t) } }
        \\
        & \ =
            \max_{\theta_1, \dots, \theta_T \in \{0,1\}^T} \sum_{t=1}^T \E \lsb{ 
            G_t^{\theta_t}\brb{ p^{\#}(\theta_t) }
            -  
            G_t^{\theta_t}(P_t)
            }
        \eqqcolon
            (\#)\;.
    \end{align*}
    Now, consider an i.i.d.\ family of Bernoulli random variables $(\Theta_t)_{t \in \N}$ with parameter $1/2$, independent of the whole family $(V_{t,\theta},W_{t,\theta})_{t \in \N, \theta \in \{0,1\}}$.
    We have that
    \begin{align*}
        (\#)
    &\ge
        \sum_{t=1}^T  \lrb{ \E \Bsb{ 
            G_t^{\Theta_t}\brb{ p^{\#}(\Theta_t) }
            }
            -  
            \E \Bsb{
            G_t^{\Theta_t}(P_t)
            }
            }
    \eqqcolon
        (\$).
    \end{align*}
Now, for each $t \in [T]$, we see that
\begin{align*}
    \E \Bsb{ 
            G_t^{\Theta_t}\brb{ p^{\#}(\Theta_t) }
            }
&=
    \E \bbsb{ \E \Bsb{ G_t^{\Theta_t}\brb{ p^{\#}(\Theta_t) } \mid \Theta_t} }
\\
&=
    \E \bbsb{ \max_{p \in [0,1]}\E \Bsb{ G_t^{\Theta_t}\brb{ p } \mid \Theta_t} }
\end{align*}
and long but straightforward computations show that, for each $p \in [0,1]$, it holds that the conditional expectation  $\E \Bsb{G_t^{\Theta_t}(p) \mid \Theta_t}$ is equal to
\begin{align*}
    \begin{cases}
        \frac{1}{4}+\e (1-2 \Theta_t) & \text{ if } 0 \le p < \frac{1}{2}-2\Theta_t \e + 2(1-\Theta_t)\e\;,\\
        \frac{3}{8}+2\e^2 & \text{ if } p = \frac{1}{2}-2\Theta_t \e + 2(1-\Theta_t)\e\;, \\
        \frac{1}{4}-\e (1-2\Theta_t) & \text{ if } \frac{1}{2}-2\Theta_t \e + 2(1-\Theta_t)\e < p \le 1\;,
    \end{cases}
\end{align*}
from which it follows that
\begin{align*}
    \max_{p \in [0,1]}\E \Bsb{G_t^{\Theta_t}(p) \mid \Theta_t}
=
    \frac{3}{8} + 2 \e^2\;.
\end{align*}
On the other hand, for each $t \in [T]$, leveraging the freezing lemma \citep[Lemma~8]{cesari2021nearest}, we have that
\begin{align*}
&
    \E \bsb{G_t^{\Theta_t}(P_t)}
=
    \E \Bsb{\E \bsb{G_t^{\Theta_t}(P_t) \mid P_t}}
=
    \E \bbsb{\Bsb{\E\bsb{G_t^{\Theta_t}(p)}}_{p = P_t}}
\\
&\;=
    \E \bbsb{\lsb{\frac{1}{2}\E\bsb{G_t^{\Theta_t}(p) \mid \Theta_t=0}+\frac{1}{2}\E\bsb{G_t^{\Theta_t}(p) \mid \Theta_t=1}}_{p = P_t}}
\end{align*}
and again, tedious but straightforward computations show that, for each $p \in [0,1]$, it holds that
\begin{align*}
    &\textstyle 
    \frac{1}{2}\E\bsb{G_t^{\Theta_t}(p) \mid \Theta_t=0}+\frac{1}{2}\E\bsb{G_t^{\Theta_t}(p) \mid \Theta_t=1}
\\
&\quad=
\textstyle
    \frac{1}{4} \lrb{\I\lcb{ p < \frac{1}{2}-2\e} + \I\lcb{\frac{1}{2}+2\e < p}}
\\
&\qquad
\textstyle
    +
    \lrb{ \frac{5}{16} + \frac{\e}{2} + \e^2  } \lrb{\I\lcb{ p = \frac{1}{2}-2\e } + \I\lcb{p = \frac{1}{2} + 2\e}}
\\
&\qquad
\textstyle
+
    \lrb{\frac{1}{4}+\e}\I\lcb{\frac{1}{2}-2\e < p < \frac{1}{2} + 2\e}
\\
&\quad\le
\textstyle
    \frac{5}{16} + \frac{\e}{2}+\e^2\;.
\end{align*}
We conclude that
$
    (\$)
\ge
    \frac{T}{16}+\lrb{\e^2 - \frac{\e}{2}}T
\ge \frac{T}{32}
$,
from which it follows that there exists $\theta_1, \dots, \theta_T \in \{0,1\}$ such that
\[
    \sup_{p^\star \colon [0,1]^d \to [0,1]
    } \E\bbsb{\sum_{t=1}^T \Brb{
    G_t^{\theta_t} \brb{ p^\star(c_t) }
    -  
    G_t^{\theta_t}(P_t)
    }
    }
\ge
    \frac{T}{32}\;. \qedhere
\]
\end{proof}

\section{Conclusions}
\label{s:conclusions}

Motivated by the real-life \emph{desideratum} to exploit prior information on the traded assets, we investigated the noisy linear contextual online learning problem of brokerage between traders without predetermined seller/buyer roles.
We provided a complete picture with tight regret bounds in all the proposed settings, achieving tightness (up to $\log$ terms) in all relevant parameters.

Our work stands on the classic interpretation of the market value of an asset as the average opinion of the market participants. 
An alternative perspective, which we leave open for future research, is when, instead, assets have an ``inherent value'', and traders' valuations are systematic biases or strategic deviations around this quantity. 
In this case, this inherent value would not be the average of the traders' valuations, and new techniques will be required to analyze this setting.

Finally, we highlight that there are many other online learning problems in digital markets whose contextual version is still open, such as market making \cite{MarketMaking2025}, first-price auctions with unknown costs \cite{Transparency2024}, trading-volume maximization \cite{cesari2025volumeFairnesss}, and optimal taxation \cite{Adaptive2025}.

\section*{Acknowledgements}
The work of FB was supported by the Project GAP (ANR-21-CE40-0007) of the French National Research Agency (ANR) and by the Chair UQPhysAI of the Toulouse ANITI AI Cluster.
TC gratefully acknowledges the support of the University of Ottawa through grant GR002837 (Start-Up Funds) and that of the Natural Sciences and Engineering Research Council of Canada (NSERC) through grants RGPIN-2023-03688 (Discovery Grants Program) and DGECR2023-00208 (Discovery Grants Program, DGECR - Discovery Launch Supplement).
RC is partially supported by the MUR PRIN grant 2022EKNE5K (Learning in Markets and Society), the FAIR (Future Artificial Intelligence Research) project, funded by the NextGenerationEU program within the PNRR-PE-AI scheme, the EU Horizon CL4-2022-HUMAN-02 research and innovation action under grant agreement 101120237, project ELIAS (European Lighthouse of AI for Sustainability).


\appendix

\section{Missing Proofs of Structural and Technical Results (Section \ref{s:structural})}

In this section, we provide the missing proofs of our structural and technical results.

\subsection{Proof of Lemma \ref{l:structural}}
\label{s:appe-proof-lemmino}

We denote by $F$ (resp., $G$) the cumulative distribution function of $V$ (resp., $W$). 
For each $p \in [0,1]$, from the Decomposition Lemma in \citep[Lemma 1]{cesa2023bilateral}, it holds that
\begin{align*}
    \E\bsb{(W-V) \I\{V \le p \le W\}}
&=
    F(p) \int_p^{1} \brb{1-G(\lambda)} \dif \lambda
    +
    \brb{1-G(p)} \int_0^p F(\lambda) \dif \lambda \;,
\\
    \E\bsb{(V-W) \I\{W \le p \le V\}}
&=
    G(p) \int_p^{1} \brb{1-F(\lambda)} \dif \lambda
    +
    \brb{1-F(p)} \int_0^p G(\lambda) \dif \lambda\;.
\end{align*}
Hence, for each $p \in [0,1]$,
\begin{multline*}
    \E\bsb{(W-V) \I\{V \le p \le W\}} = F(p) \int_p^{1} \brb{1-G(\lambda)} \dif \lambda
    +
    \brb{1-G(p)} \int_0^p F(\lambda) \dif \lambda
\\
    \begin{aligned}
&=
  F(p) \lrb{
m -   \int_0^{p} \brb{1-G(\lambda)} \dif \lambda  
  }
    +
 \int_0^p F(\lambda) \dif \lambda
 -
G(p)
 \int_0^p F(\lambda) \dif \lambda
 \\
&=
 \int_0^p F(\lambda) \dif \lambda 
 + (m-p) F(p)
 - p G(p)
 +G(p) \int_0^{p} \brb{1-F(\lambda)} \dif \lambda  
 +F(p)  \int_0^{p} G(\lambda) \dif \lambda  
 \\
&=
 \int_0^p \lrb{F+G}(\lambda) \dif \lambda 
+ (m-p) \lrb{F+G}(p)
- G(p) \lrb{ m - \int_0^{p} \brb{1-F(\lambda)} \dif \lambda   }
+ \lrb{F(p)-1} \int_0^{p} G(\lambda) \dif \lambda  
\\
&=
    \int_0^p (F+G)(\lambda) \dif \lambda + (m-p)(F+G)(p) - \lrb{ G(p) \int_p^{1} \brb{1-F(\lambda)} \dif \lambda
    +
    \brb{1-F(p)} \int_0^p G(\lambda) \dif \lambda }
\\
&=
    \int_0^p (F+G)(\lambda) \dif \lambda + (m-p)(F+G)(p) - \E\bsb{(V-W) \I\{W \le p \le V\}}\;.
    \end{aligned}
\end{multline*}
Rearranging, it follows that, for each $p \in [0,1]$,
\begin{align*}
    \E\bsb{\gft(p,V,W)}
&=
    \E\bsb{(W-V) \I\{V \le p \le W\}} + \E\bsb{(V-W) \I\{W \le p \le V\}}
\\
&=
    \int_0^p (F+G)(\lambda) \dif \lambda + (m-p)(F+G)(p)\;. 
\end{align*}
Hence, for any $p \in [0,1]$, it holds that
\begin{align*}
    \E\bsb{\gft(m,V,W) - \gft(p,V,W)}
&=
    \int_p^m \brb{(F+G)(\lambda)-(F+G)(p)} \dif \lambda \ge 0\;.
\end{align*}
Finally, since $F$ and $G$ are absolutely continuous with weak derivative bounded by $L$, by the fundamental theorem of calculus \citep[Theorem 14.16]{bass2013real} it holds that, for $p \in [0,1]$,
\[
    \E\bsb{\gft(m,V,W) - \gft(p,V,W)}
=
    \int_p^m \int_p^\lambda (F'+G')(\vartheta)\dif \vartheta \dif \lambda
\le
    2L  \int_p^m |\lambda-p| \dif \lambda 
=
    L |m-p|^2\;. \qedhere
\]

\subsection{Proof of Lemma \ref{l:technical-linear-algebra}}
\label{s:appe-technical-linear-algebra}

{
\renewcommand{\phi}{\psi}
By the bias-variance decomposition:
\[
\mathbb{E}
    \bsb{
    | 
 a^\top
\hat{\phi}_s -
a^\top
\phi 
    |^2
    }
    =
   \brb{\underbrace{
\mathbb{E}
    \lsb{
a^\top
\hat{\phi}_s -
a^\top
\phi 
    }
    }_{\text{bias}}
    }^2
        +
\underbrace{\mathrm{Var} [  
a^\top
\hat{\phi}_s
      ]}_{\text{variance}}\;.
\]
Noting that, for each $t \in \N$, it holds that $\E[H_s^\top] = f_s^\top \phi$, we have,
\begin{align*}
\mathbb{E}
    [
a^\top
\hat{\phi}_t -
a^\top
\phi 
    ]
   &= 
a^\top
(f_s f_s^\top + l^{-1} \bone_l )^{-1}
f_s f_s^\top \phi 
\\
&
\qquad
-
a^\top
(f_s f_s^\top + l^{-1}  \bone_l 
)^{-1} 
(
f_s f_s^\top \phi 
+
l^{-1} \phi 
)
    \\
    & =
    -
    a^\top
(f_s f_s^\top + l^{-1} \bone_l 
)^{-1} 
l^{-1}  \phi \eqqcolon (\circ)\;,
\end{align*}
and hence, by the Cauchy-Schwarz inequality applied  to the scalar product $(\alpha,\beta) \mapsto \alpha^\top (f_s f_s^\top + l^{-1} \bone_l )^{-1} \beta$, by the fact that $( f_s f_s^\top + l^{-1} \bone_l)^{-1} \preceq l^{-1} \bone_l^{-1} $ (where, for any two symmetric matrices $A_1, A_2$, we say that $A_1 \preceq A_2$ if and only if $A_2-A_1$ is semi-positive definite), and by the fact that $\lno{ \phi }_2^2 \le l$, we can control the bias term as follows  
\begin{align*}
&
    \lrb{ 
\mathbb{E}
    [
a^\top
\hat{\phi}_s -
a^\top
\phi 
    ]
    }^2
    = 
    (\circ)^2
\\
&
    \leq
a^\top
( f_s f_s^\top + l^{-1} \bone_l 
)^{-1} 
a 
~ 
\cdot 
~
l^{-1} \phi^\top 
( f_s f_s^\top + l^{-1} \bone_l 
)^{-1}
l^{-1} \phi \nonumber
\\
& 
\qquad\qquad
\leq 
a^\top
( f_s f_s^\top + l^{-1} \bone_l 
)^{-1} 
a 
~ \cdot  ~
l^{-1} \phi^\top 
 ( l^{-1} \bone_l )
^{-1}
l^{-1} \phi
\\
&
 \leq
a^\top
( f_s f_s^\top + l^{-1} \bone_l 
)^{-1} 
a. 
\end{align*}
Letting $\Delta_s$ be the $s \times s$ diagonal matrix with vector of diagonal elements given by $\lrb{\mathrm{Var}[Z_1], \dots,  \mathrm{Var}[Z_s]}$, we have
\[
\mathrm{Var} [  
a^\top
\hat{\phi}_s
    ]
= 
a^\top 
( f_s f_s^\top + l^{-1} \bone_l 
)^{-1}
( 
f_s 
\Delta_s 
f_s^\top
)
( f_s f_s^\top + l^{-1} \bone_l 
)^{-1}
a.
\]
Now, given that $Z_1,\dots,Z_s$ are $[0,1]$-valued, we have that $\Delta_s$ is diagonal with diagonal elements less than $1$, and hence
$f_s 
\Delta_s
f_s^\top
\preceq 
f_s 
f_s^\top
+
l^{-1} \bone_l
$, which yields a control on the variance term as follows, 
\begin{align*}
\mathrm{Var} [  
a^\top
\hat{\phi}_s
    ]
& \leq
    a^\top
    ( f_s f_s^\top + l^{-1} \bone_l 
    )^{-1}
    ( f_s f_s^\top + l^{-1}\bone_l 
    )
    ( f_s f_s^\top + l^{-1} \bone_l 
    )^{-1}
a
\\
&
=
a^\top
   ( f_s f_s^\top + l^{-1} \bone_l 
)^{-1}
a \;.
\end{align*}

Putting everything together, we have
\begin{align*} 
\mathbb{E}
    \bsb{
    |
 a^\top
\hat{\phi}_s -
a^\top
\phi 
    |^2
    }
&\leq 
    2
    a^\top
   ( f_s f_s^\top + l^{-1} \bone_l 
    )^{-1}
a
=
    2\lno{ a}_{( f_s f_s^\top + l^{-1} \bone_l 
    )^{-1}}^2 
\\ 
&=
    2 \lno{ a}_{( \sum_{r=1}^{s} a_r a_r^\top + l^{-1} \bone_l 
    )^{-1}}^2\;
\\
&
=
    \lno{ \sqrt{2} a}_{( \sum_{r=1}^{s} a_r a_r^\top + l^{-1} \bone_l 
    )^{-1}}^2\;,  
\end{align*}
where we recall that for any positive definite matrix $A \in \bbR^{l \times l}$ and each $u \in \bbR^l$, we have defined
$
    \lno{u}_A \coloneqq
    \sqrt{u^\top A u}
$.
}

\section{Missing Upper Bound Proofs}

In this section, we provide all missing proofs of our regret upper bounds.

\subsection{Proof of Theorem \ref{t:ridge}}
\label{s:appe-proof-ub-full}

Fix any $t \ge 2$. 
Now, for each $n \in [t-1]$, define $Z_{2n-1} \coloneqq V_n$ and $Z_{2n} \coloneqq W_n$. 
Define $l \ceq d$ and $s \ceq 2(t-1)$.
For each $n \in [t-1]$, define $a_{2n-1} \ceq c_{n} \eqqcolon a_{2n}$.
Let $\psi \coloneqq \phi$ and $a \coloneqq c_t$.
Notice that, if $j \in [s]$ is odd, then $\E[Z_j] = \E\bsb{ V_{ \frac{j+1}{2} } } = c_{ \frac{j+1}{2} }^\top \phi = a_j^\top \psi$, while if $j \in [s]$ is even, then $\E[Z_j] = \E\bsb{ W_{ \frac{j}{2} } } = c_{ \frac{j}{2} }^\top \phi = a_j^\top \psi$.
Hence, we can apply \Cref{l:technical-linear-algebra} to obtain

\begin{align*}
    \E\bsb{|c_t^\top \hat{\phi}_{t-1} - c_t^\top \phi|^2}
&
=
    \E\bsb{|a^\top \hat{\psi}_{s} - a^\top \psi|^2}
\le
    \lno{ \sqrt{2} a }^2_{\brb{ \sum_{j=1}^s a_j a_j^\top + l^{-1} \bone_l }^{-1}}
=
    \lno{ \sqrt{2} c_t }^2_{\brb{ 2 \sum_{n=1}^{t-1}  c_n c_n^\top + d^{-1} \bone_d }^{-1}}
\\
&
=
    \lno{ \sqrt{2} c_t }^2_{ \brb{ \sum_{n=1}^{t-1}  (\sqrt{2} c_n) (\sqrt{2} c_n)^\top + d^{-1} \bone_d }^{-1} } 
    \;.
\end{align*}

Hence, leveraging \Cref{cor:regret} and the previous inequality, for any $T \in \N$, we have that
\begin{align*}
    R_T
&\le
    \sum_{t=1}^T 1 \land \Brb{L\E\lsb{|P_t - c_t^\top \phi|^2}}
\le
    1 + \sum_{t=2}^T L\E\lsb{|c_t^\top \hat{\phi}_{t-1} - c_t^\top \phi|^2}
\le
    1 + \sum_{t=2}^T \lno{ \sqrt{2} c_t }^2_{ \brb{ \sum_{n=1}^{t-1}  (\sqrt{2} c_n) (\sqrt{2} c_n)^\top + d^{-1} \bone_d }^{-1} } 
\\
&
\leq
1 + 2L d \ln
\left(
\frac{d d^{-1} + 2d(T-1)}{d d^{-1}}
\right)
	=
	1 + 2L d \ln
	\brb{
1 +2d(T-1)
	}
\le
    1+2Ld\ln(2dT)
\end{align*}
where the first inequality of the second line follows from the elliptical potential lemma \citep[Lemma 19.4]{lattimore2020bandit}.

If $d<T/2$, this implies that $R_T \le 1+2Ld\ln(2dT) \le 1+4Ld\ln T$. 
If, instead, $d \ge T/2$, then, recalling that $L\ge 1$, we obtain once again that $R_T \le T \le 1+4Ld\ln T$, concluding the proof.

\section{Missing Lower Bound Proofs}

In this section, we provide all the missing proofs of our lower bounds, starting from that of the full-feedback setting.

\subsection{Proof of Theorem \ref{t:lower-bound-full-L}}
\label{s:appe-proof-t:lower-bound-full-L}

The high-level idea of this proof is to build a reduction to a non-contextual full-feedback lower bound construction (see, e.g., the one appearing in \cite[Theorem 5]{bolic2023online}).

    Without loss of generality, we assume that $d$ divides $T$.
    In fact, if we prove the theorem for this case, then, by leveraging that $T \ge 2d$ and $T \ge 4$, the general case follows from
    \[
        R_T
    \ge
        b L d \ln \brb{ \lfl{T/d}d }
    \ge
        \frac{b}{2} L d \ln T\;. 
    \]
    Let $n \coloneqq T / d$.
    Let $e_1,\dots, e_d$ be the canonical basis of $\bbR^d$.
    Define, for all $i \in [d]$ and $j \in [n]$, the context $c_{j+(i-1)n} \coloneqq e_i$.
    We assume that these contexts are known to the learner in advance and, therefore, we can restrict the proof to deterministic algorithms without any loss of generality.
    
    Let $L\ge 2$, 
    $J_L \coloneqq \bsb{\frac12 - \frac1{14L}, \frac12 + \frac1{14L}}$,
    $
        f
    \coloneqq 
        \I_{\lsb{0,\frac37}} 
        + L \I_{J_L}
        + \I_{\lsb{\frac47, 1}}
    $,
    and, for any $\e \in [-1,1]$, 
    $
        g_\e 
    \coloneqq 
        - \e \I_{\lsb{\frac17,\frac3{14}}} 
        + \e \I_{ \left( \frac3{14},\frac27 \right] }
    $
    and $f_\e \coloneqq f + g_\e$
    .
    For any $\e \in [-1,1]$, note that $0 \le f_\e \le L$ and $\int_0^1 f_{\e}(x) \dif x = 1$, hence $f_{\e}$ is a valid density on $[0,1]$ bounded by $L$.
    We will denote the corresponding probability measure by $\nu_{\e}$, set $\bar{\nu}_\e \coloneqq \int_{[0,1]} x \dif \nu_{\e}(x)$, and notice that direct computations show that $\bar\nu_\e = \frac12 + \frac{\e}{196}$.
    Consider for each $q \in [0,1]$, an i.i.d.\ sequence $(B_{q,t})_{t \in \N}$ of Bernoulli random variables of parameter $q$, an i.i.d.\ sequence $(\tilde{B}_t)_{t \in \N}$ of Bernoulli random variables of parameter $1/7$, an i.i.d. sequence $(U_t)_{t \in \N}$ of uniform random variables on $[0,1]$, and uniform random variables $E_1,\dots,E_d$ on $[-\bar{\e}_L,\bar{\e}_L]$, where $\bar{\e}_L := \frac{7}{L}$, such that $\lrb{(B_{q,t})_{t \in \N, q \in [0,1]} , (\tilde{B}_t)_{t \in \N}, (U_t)_{t \in \N}, E_1,\dots,E_d}$ is an independent family. 
    Let $\varphi \colon [0,1] \to [0,1]$ be such that, if $U$ is a uniform random variable on $[0,1]$, then the distribution of $\varphi(U)$ has density $\frac{7}{6}\cdot f \cdot \I_{[0,1]\m [\fracc{1}{7},\fracc{2}{7}]}$ (which exists by the Skorokhod representation theorem \citep[Section 17.3]{williams1991probability}).
    For each $\e \in [-1,1]$ and $t \in \N$, define
    \begin{equation}
    G_{\e,t}
    \coloneqq
    \lrb{ \frac{2+U_t}{14}(1-B_{\frac{1+\e}{2},t}) + \frac{3+U_t}{14}B_{\frac{1+\e}{2},t}}\tilde{B}_t + \varphi(U_t)(1-\tilde{B}_t) \;,
    \label{e:representation_of_V}
    \end{equation}
    $V_{\e,t} \coloneqq G_{\e,2t-1}$, $W_{\e,t} \coloneqq G_{\e,2t}$, $\xi_{\e,t} \coloneqq V_{\e,t} - \bar{\nu}_{\e}$, and $\zeta_{\e,t} \coloneqq W_{\e,t} - \bar{\nu}_{\e}$.
    In the following, if $a_1,\dots,a_d$ is a sequence of elements, we will use the notation $a_{1:d}$ as a shorthand for $(a_1,\dots,a_d)$. 
    For each $\e_1,\dots,\e_d \in [-1,1]$, each $i \in [d]$, and each $j \in [n]$, define the random variables $\xi^{\e_{1:d}}_{j+(i-1)n} \coloneqq \xi_{\e_i,j+(i-1)n}$ and $\zeta^{\e_{1:d}}_{j+(i-1)n} \coloneqq \zeta_{\e_i,j+(i-1)n}$.
    The family $\brb{\xi^{\e_{1:d}}_{t},\zeta^{\e_{1:d}}_{t}}_{t\in [T], \e_{1:d}\in[-1,1]^d}$ is an independent family, independent of $(E_1,\dots,E_d)$, and for each $i \in [d]$ and each $j \in [n]$ it can be checked that the two random variables $\xi^{\e_{1:d}}_{j+(i-1)n},\zeta^{\e_{1:d}}_{j+(i-1)n}$ are zero mean with common distribution given by $\nu_{\e_i}$.
    For each $\e_1,\dots,\e_d \in [-1,1]$, let $\phi_{\e_{1:d}} \coloneqq (\bar{\nu}_{\e_1},\dots,\bar{\nu}_{\e_d})$, and for each $i \in [d]$ and $j \in [n]$, let $V^{\e_{1:d}}_{j+(i-1)n} \coloneqq c^\top_{j+(i-1)n}\phi_{\e_{1:d}} + \xi^{\e_{1:d}}_{j+(i-1)n}$ and $W^{\e_{1:d}}_{j+(i-1)n} \coloneqq c^\top_{j+(i-1)n}\phi_{\e_{1:d}} + \zeta^{\e_{1:d}}_{j+(i-1)n}$.
    Note that these last two random variables are $[0,1]$-valued zero-mean perturbations of $c^\top_{j+(i-1)n}\phi_{\e_{1:d}}$ with shared density given by $f_{\e_i}$, and hence bounded by $L$.
    
    We will show that any algorithm has to suffer the regret inequality in the statement of the theorem if the sequence of evaluations is $V^{\e_{1:d}}_{1},W^{\e_{1:d}}_{1},\dots,V^{\e_{1:d}}_{T},W^{\e_{1:d}}_{T}$, for some $\e_1,\dots,\e_d \in [0,1]$.
    
    Before doing that, we first need the following.
    For any $\e_1,\dots,\e_d\in[-1,1]$, $p\in[0,1]$, and $t \in [T]$ let $\GFT^{\e_{1:d}}_{t}(p) \coloneqq \gft(p,V^{\e_{1:d}}_{t}, W^{\e_{1:d}}_{t})$
    .
    
    By \Cref{l:structural}, we have, for all $\e_1,\dots,\e_d \in [-1,1],i \in [d],j\in[n]$, and $p \in [0,1]$,
    \[
        \E\bsb{\GFT^{\e_{1:d}}_{j+(i-1)n}(p) }
    =
        2\int_0^p\int_0^\lambda f_{\e_i}(s)\dif s \dif \lambda + 2 (\bar\nu_{\e_i} - p)\int_0^p f_{\e_i}(s) \dif s \;,
    \]
    which,  together with the fundamental theorem of calculus ---\citep[Theorem 14.16]{bass2013real}, noting that $p\mapsto\E\bsb{ \GFT^{\e_{1:d}}_{j+(i-1)n}(p) }$ is absolutely continuous with derivative defined a.e.\ by $p\mapsto 2(\bar\nu_{\e_i}-p) f_{\e_i}(p)$--- yields, for any $p\in J_L$,
    \begin{equation}
        \E\bsb{ \GFT^{\e_{1:d}}_{j+(i-1)n}(\bar\nu_{\e_i}) } - \E\bsb{ \GFT^{\e_{1:d}}_{j+(i-1)n}(p) }
    =
        L |\bar\nu_{\e_i} - p|^2 \;.
        \label{e:proof-lb}
    \end{equation}
    Note also that for all $\e_1,\dots,\e_d \in [-\bar{\e}_L,\bar{\e}_L]$, $t\in [T]$, and $p\in [0,1]\setminus J_L$, a direct verification shows that 
    \begin{equation}
        \E\bsb{ \GFT^{\e_{1:d}}_{t } (p)}
    \le
        \E\lsb{ \GFT^{\e_{1:d}}_{t}\lrb{\fracc12}}  \;.
    \label{e:reduction_to_J_L}
    \end{equation}

    Fix any arbitrary deterministic algorithm for the full feedback setting $(\alpha_t)_{t\in [T]}$, i.e., (given that the contexts $c_1,\dots,c_T$ are here fixed and declared ahead of time to the learner), a sequence of functions $\alpha_t \colon \brb{[0,1]\times[0,1]}^{t-1} \to [0,1]$ mapping past feedback into prices (with the convention that $\alpha_1$ is just a number in $[0,1]$). For each $t \in [T]$, define $\tilde{\alpha}_t \colon \brb{[0,1]\times[0,1]}^{t-1} \to J_L$ equal to $\alpha_t$ whenever $\alpha_t$ takes values in $J_L$, and equal to $1/2$ otherwise. 
    Define $Z_1\coloneqq \frac{1+E_1}{2},\dots,Z_d\coloneqq \frac{1+E_d}{2}$.
    
    Now, note the following
    {%
    \allowdisplaybreaks%
    \begin{align*}
    &
        \sup_{\e_{1:d}\in[-\bar{\e}_L,\bar{\e}_L]^d} 
        \sum_{i=1}^d \sum_{j=1}^n \E\Bsb{\GFT^{\e_{1:d}}_{j+(i-1)n}(\bar\nu_{\e_i}) - \GFT^{\e_{1:d}}_{j+(i-1)n}\brb{ \alpha_t( V^{\e_{1:d}}_{1},W^{\e_{1:d}}_{1}, \dots,V^{\e_{1:d}}_{j-1+(i-1)n}, W^{\e_{1:d}}_{j-1+(i-1)n} ) } }
    \\
    &
    \;
    \overset{(\ref{e:reduction_to_J_L})}\ge
        \sup_{\e_{1:d}\in[-\bar{\e}_L,\bar{\e}_L]^d} 
        \sum_{i=1}^d \sum_{j=1}^n \E\Bsb{\GFT^{\e_{1:d}}_{j+(i-1)n}(\bar\nu_{\e_i}) - \GFT^{\e_{1:d}}_{j+(i-1)n}\brb{ \tilde{\alpha}_t( V^{\e_{1:d}}_{1},W^{\e_{1:d}}_{1}, \dots,V^{\e_{1:d}}_{j-1+(i-1)n} W^{\e_{1:d}}_{j-1+(i-1)n} ) } }
    \\
    &
    \;
    \overset{\spadesuit}=
        L \sup_{\e_{1:d}\in[-\bar{\e}_L,\bar{\e}_L]^d} 
        \sum_{i=1}^d \sum_{j=1}^n \E\Bsb{ \babs{ \bar\nu_ {\e_i} - \tilde{\alpha}_t(V^{\e_{1:d}}_{1},W^{\e_{1:d}}_{1}, \dots, V^{\e_{1:d}}_{j-1+(i-1)n},W^{\e_{1:d}}_{j-1+(i-1)n}) }^2 }
    \\
    &
    \;
    \ge
        L \sum_{i=1}^d \sum_{j=1}^n  \E\Bsb{ \babs{ \bar\nu_{E_i} - \tilde{\alpha}_t(V^{E_{1:d}}_{1},W^{E_{1:d}}_{1}, \dots, V^{E_{1:d}}_{j-1+(i-1)n},W^{E_{1:d}}_{j-1+(i-1)n}) }^2 }
    \\
    &
    \;
    \overset{\varheartsuit}\ge
        L \sum_{i=1}^d \sum_{j=1}^n \E\Bsb{ \babs{ \bar\nu_{E_i} - \E[\bar\nu_{E_i} \mid V^{E_{1:d}}_{1},W^{E_{1:d}}_{1}, \dots,V^{E_{1:d}}_{j-1+(i-1)n}, W^{E_{1:d}}_{j-1+(i-1)n} ] }^2 }
    \\
    &
    \;
    =
        \frac{L}{196^2} \sum_{i=1}^d \sum_{j=1}^n \E\Bsb{ \babs{ E_i - \E[E_i \mid V^{E_{1:d}}_{1},W^{E_{1:d}}_{1} \dots,V^{E_{1:d}}_{j-1+(i-1)n}, W^{E_{1:d}}_{j-1+(i-1)n} ] }^2 }
    \\
    &
    \;
    \overset{\vardiamondsuit}\ge
        \frac{L}{196^2} \sum_{i=1}^d \sum_{j=1}^n \E\Bsb{ \babs{ E_i - \E[E_i \mid B_{\frac{1+E_{i}}{2},1+2(i-1)n},\dots,B_{\frac{1+E_{i}}{2},2(j-1)+2(i-1)n}] }^2 }
    \\
    &
    \;
    \overset{\clubsuit}=
        \frac{L}{196^2} \sum_{i=1}^d \sum_{j=1}^n \E\Bsb{ \babs{ E_i - \E[E_i \mid B_{\frac{1+E_i}{2},1},\dots,B_{\frac{1+E_i}{2},2(j-1)}] }^2 }
    \\
    &
    \;
    =
        \frac{L}{98^2} \sum_{i=1}^d \sum_{j=1}^n \E\Bsb{ \babs{ Z_i - \E[Z_i \mid B_{Z_i,1},\dots,B_{Z_i,2(j-1)}] }^2 }
    \end{align*}
    }%
    where $\spadesuit$ follows from \eqref{e:proof-lb} and the fact that $\tilde{\alpha}_t$ takes values in $J_L$;
    $\varheartsuit$ from the fact that the minimizer of the $L^2(\Pb)$-distance from $\bar\nu_{E_i}$ in $\sigma(V^{E_{1:d}}_{1},W^{E_{1:d}}_{1}, \dots, V^{E_{1:d}}_{j-1+(i-1)n},W^{E_{1:d}}_{j-1+(i-1)n})$ is $\E\bsb{ \bar\nu_{E_i} \mid V^{E_{1:d}}_{1},W^{E_{1:d}}_{1},\dots, V^{E_{1:d}}_{j-1+(i-1)n},W^{E_{1:d}}_{j-1+(i-1)n} }$ (see, e.g., \citep[Section 9.4]{williams1991probability});
    $\vardiamondsuit$ follows from the fact that, by \Cref{e:representation_of_V} and the independence of $E_i$ from $\lrb{(B_{q,t})_{t \in \N, q \in [0,1]} , (\tilde{B}_t)_{t \in \N}, (U_t)_{t \in \N}}$, the conditional expectation $\E\bsb{ E_i \mid V^{E_{1:d}}_{1},W^{E_{1:d}}_{1}, \dots,V^{E_{1:d}}_{j-1+(i-1)n} , W^{E_{1:d}}_{j-1+(i-1)n} }$ is a measurable function of $B_{\frac{1+E_{i}}{2},1+2(i-1)n},\dots,B_{\frac{1+E_{i}}{2},2(j-1)+2(i-1)n}$, together with the same observation made in $\varheartsuit$ about the minimization of $L^2(\Pb)$ distance; and $\clubsuit$ follows from the fact that the sequence $\lrb{B_{\frac{1+E_i}{2},t}}_{t \in \N}$ is i.i.d..

    Finally, the general term of this last sum is the expected squared distance between the random parameter (drawn uniformly over $[ \fracc{(1 - \bar{\e}_L)}{2}, \fracc{(1 + \bar{\e}_L)}{2} ]$) of an i.i.d.\ sequence of Bernoulli random variables and the conditional expectation of this random parameter given $2(j-1)$ independent realizations of these Bernoullis. 
    A probabilistic argument shows that there exist two universal constants $\tilde{a}, \tilde{b}>0$ such that, for all $j \ge \tilde{b} L^4$ and each $i \in [d]$, 
    \begin{equation}
        \label{e:probabilistic}
        \E\Bsb{ \babs{ Z_i - \E[Z_i \mid B_{Z_i,1}, \dots, B_{Z_i,2(j-1)} ] }^2 }
    \ge
        \tilde a \frac{1}{j-1} \;.
    \end{equation}
    At a high level, this is because, in an event of probability $\Omega(1)$, if $j$ is large enough, the conditional expectation $\E[Z_i \mid B_{Z_i,1}, \dots, B_{Z_i,2(j-1)} ]$ is very close to the empirical average $\frac{1}{2(j-1)}\sum_{s=1}^{2(j-1)} B_{Z_i,s}$, whose expected squared distance from $Z$ is $\Omega\brb{ 1/(j-1)}$. 
    For a formal proof of \eqref{e:probabilistic} with explicit constants, we refer the reader to \cite[Appendix B of the extended arxiv version]{bolic2023online}.
    Summing over $i \in [d]$ and $j \in [n]$, we obtain that there exist $\e_1,\dots,\e_d \in [-1,1]^d$ such that
    \begin{align*}
    &
        \sum_{i=1}^d \sum_{j=1}^n \E\Bsb{\GFT^{\e_{1:d}}_{j+(i-1)n}(\bar\nu_{\e_i}) - \GFT^{\e_{1:d}}_{j+(i-1)n}\brb{ \tilde{\alpha}_t( V^{\e_{1:d}}_{1},W^{\e_{1:d}}_{1}, \dots, V^{\e_{1:d}}_{j-1+(i-1)n}, W^{\e_{1:d}}_{j-1+(i-1)n} ) } }
    \\
    &
    \qquad
    =
        \Omega(Ld \ln n)
    =
        \Omega(Ld \ln T) \;.
    \end{align*}

\subsection{Proof of Theorem \ref{t:lower-bound-two-bit}}
\label{s:appe-proof-t:lower-bound-two-bit}

The high-level idea of this proof is to build a reduction to a non-contextual lower bound construction (see, e.g., the one appearing in \cite[Theorem 3]{bolic2023online}).

Fix $L\ge 2$ and $T\in \N$.

We will use the very same notation as in the proof of \Cref{t:lower-bound-full-L}. In particular, the contexts $c_1,\dots,c_T$ are again the same as before and declared ahead of time to the learner.
We will show that for each algorithm for contextual brokerage with 2-bit feedback and each time horizon $T$, if $R_T^{\e_{1:d}}$ is the regret of the algorithm at time horizon $T$ when the traders' valuations are $V^{\e_{1:d}}_1,W^{\e_{1:d}}_1,\dots,V^{\e_{1:d}}_T,W^{\e_{1:d}}_T$, then $\max_{\sigma_{1:d} \in \{-1,1\}^d} R_T^{(\sigma_1 \e,\dots,\sigma_d \e)} = \Omega\brb{ \sqrt{dLT} }$ if $\e = \Theta\brb{ (LT/d)^{-1/4}}$ and $T=\Omega(dL^3)$.

Note that for all $\e_{1:d} \in [-1,1]^d$, $i \in [d]$, $j \in [n]$, and $p<\frac12$, if $\e_i > 0$, then, a direct verification shows that
\begin{equation}
    \E\lsb{ \GFT^{\e_{1:d}}_{j+(i-1)n} \lrb{\fracc12} }
\ge
    \E\bsb{ \GFT^{\e_{1:d}}_{j+(i-1)n} (p) } \;.
    \label{e:lower-bound-1}
\end{equation}
Similarly, for all $\e_{1:d} \in [-1,1]^d$, $i \in [d]$, $j \in [n]$, and $p>\frac12$, if $\e_i < 0$, then
\begin{equation}
       \E\lsb{ \GFT^{\e_{1:d}}_{j+(i-1)n} \lrb{\fracc12} }
\ge
    \E\bsb{ \GFT^{\e_{1:d}}_{j+(i-1)n} (p) } \;.
    \label{e:lower-bound-2}
\end{equation}
Furthermore, a direct verification shows that, for each $\e_{1:d} \in [-1,1]^d$ and $t \in [T]$, 
\begin{equation}
    \max_{p \in [0,1]} \E\bsb{ \GFT^{\e_{1:d}}_{t} (p)} - \max_{p \in [\frac{1}{7},\frac{2}{7}]} \E\bsb{ \GFT^{\e_{1:d}}_{t} (p)} \ge \frac{1}{50} = \Omega(1) \;.
    \label{e:lower-bound-3}
\end{equation}
Now, assume that $T \ge d L^3/14^4$ so that, defining  $\e \coloneqq (LT/d)^{-1/4}$, we have that for any $\sigma_{1:d} \in \{-1,1\}^d$, any $i \in [d]$ and any $j \in [n]$, the maximizer of the expected gain from trade $p \mapsto \E \bsb{ \GFT^{(\sigma_1\e,\dots,\sigma_d \e)}_{j+(i-1)n} (p)}$ is at $\frac12 +\frac{\sigma_i\e}{196}$ and hence belongs to the spike region $J_L$.
If $\sigma_i = 1$ (resp., $\sigma_i = -1$), the optimal price for the rounds $1+(i-1)n,\dots, in$ belongs to the region $\bigl(\frac{1}{2},\frac{1}{2}+\frac{1}{14L}\bigr]$ (resp., $\bigl[\frac12 - \frac1{14L}, \frac12\bigr)$).
By posting prices in the wrong region $\bsb{0, \frac{1}{2}}$ (resp., $\bsb{\frac{1}{2}, 1}$) in the $\sigma_i = 1$ (resp., $\sigma_i = -1$) case, the learner incurs a $\Omega(L\e^2) = \Omega\brb{\sqrt{L/dT}}$ instantaneous regret by \eqref{e:proof-lb} and \eqref{e:lower-bound-1} (resp., \eqref{e:proof-lb} and \eqref{e:lower-bound-2}).
Then, in order to attempt suffering less than $\Omega\brb{\sqrt{L/T} \cdot n} = \Omega\brb{\sqrt{LT/d }}$ regret in the rounds $1+(i-1)n,\dots, in$, the algorithm would have to detect the sign of $\sigma_i$ and play accordingly.
We will show now that even this strategy will not improve the regret of the algorithm (by more than a constant) because of the cost of determining the sign of $\sigma_i$ with the available feedback. 
Since for any $i \in [d]$ and $j \in [n]$, the feedback received from the two traders at time $j+(i-1)n$ by posting a price $p$ is $\I\{p \le V^{(\sigma_1\e,\dots,\sigma_d\e)}_{j+(i-1)n}\}$ and $\I\{ p \le W^{(\sigma_1\e,\dots,\sigma_d\e)}_{j+(i-1)n}\}$, the only way to obtain information about (the sign of) $\sigma_i$ is to post in the costly ($\Omega(1)$-instantaneous regret by \Cref{e:lower-bound-3}) sub-optimal region $[\frac{1}{7},\frac{2}{7}]$
.
However, posting prices in the region $[\frac{1}{7},\frac{2}{7}]$ at time $j + (i-1)n$ can't give more information about $\sigma_i$ than the information carried by $V^{(\sigma_1\e,\dots,\sigma_d\e)}_{j+(i-1)n}$ and $W^{(\sigma_1\e,\dots,\sigma_d\e)}_{j+(i-1)n}$, which, in turn, can't give more information about $\sigma_i$ than the information carried by the two Bernoullis $B_{\frac{1+\sigma_i\e}{2},2(j+(i-1)n)-1}$ and $B_{\frac{1+\sigma_i\e}{2},2(j+(i-1)n)}$. 
Since only during rounds $1+(i-1)n,\dots,in$ is possible to extract information about the sign of $\sigma_i$ and, (via an information-theoretic argument) in order to distinguish the sign of $\sigma_i$ having access to i.i.d.\ Bernoulli random variables of parameter $\frac{1+\sigma_i\e}{2}$ requires $\Omega(1/\e^2) = \Omega\brb{ \sqrt{LT/d} }$ samples, we are forced to post at least $\Omega\brb{ \sqrt{LT/d} }$ prices in the costly region $\bsb{\frac{1}{7},\frac{2}{7}}$ during the rounds $1+(i-1)n,\dots,in$ suffering a regret of $\Omega\brb{ \sqrt{LT/d} } \cdot \Omega(1) = \Omega\brb{ \sqrt{LT/d}}$.
Putting everything together, no matter what the strategy, each algorithm will pay at least $\Omega\brb{ \sqrt{LT/d}}$ regret in each epoch $1+(i-1)n,\dots,in$ for every $i \in [d]$, resulting in an overall regret of $\Omega\brb{ \sqrt{LT/d}}\cdot d = \Omega\brb{ \sqrt{dLT}}$.

\end{document}